%% file: Paper.tex
\documentclass[12pt,a4paper]{article}
\renewcommand{\arraystretch}{1.5}
\usepackage{amssymb,amsbsy,amsfonts,amsmath}
\usepackage{color,comment,epsfig,graphicx,harvard,rotating}
\usepackage{adjustbox,amsthm,mathtools,bbm}
\usepackage{dsfont,hyperref,pifont,ulem}
\usepackage{subcaption}
\usepackage{settings}
\usepackage{tikz}
\usetikzlibrary{calc,arrows.meta,intersections,patterns,positioning,shapes.misc,fadings,through,decorations.pathreplacing}
\usepackage[dvipsnames]{xcolor}
\definecolor{myblue}{rgb}{0, 0., 0.51}
\definecolor{ColorOne}{named}{MidnightBlue}
\definecolor{ColorTwo}{named}{Dandelion}
\definecolor{ColorThree}{named}{Plum}
\usepackage[shortlabels]{enumitem}
\setlist[enumerate]{leftmargin=0.8cm}
\usepackage{tcolorbox}
\usepackage{booktabs}

\usepackage{threeparttable}
\usepackage{soul}
\usepackage{caption}
\usepackage{array}
\usepackage{pdflscape}

\citationstyle{dcu}

\begin{document}
\renewcommand{\arraystretch}{2}
\lineskip=1ex \baselineskip 5ex
\pagestyle{empty}
\begin{center}
\noindent{\Large\bf Treatment-effect heterogeneity and interactive fixed effects: Can we control for too much?}\vskip 3em

\begin{tabular}{crcrc}
\textbf{Murilo Cardoso} && \textbf{Bruno Ferman} && \textbf{Marcelo Fernandes}\\
\multicolumn{5}{c}{Sao Paulo School of Economics, FGV}
\end{tabular}\end{center}\vskip 3em

\noindent\textbf{Abstract:}~~This paper studies the interactive fixed effects (IFE) estimator in a panel-data setting with heterogeneous treatment effects. We show that, if the treatment-effect heterogeneity admits a linear factor structure, the IFE estimator could fail to recover the average treatment effect on the treated units. The problem arises because the interactive fixed effects absorb the heterogeneity in the treatment effect, creating a \textit{bad-control} problem. With time-invariant factors or unit-invariant loadings in the treatment effect heterogeneity, identification may further break down due to multicollinearity. These problems are not present in alternative estimation methods that exclude treated units in post-treatment periods from the factor estimation.

\noindent\textbf{JEL classification}:~~C18, C19, C23, C38, C55\\
\noindent\textbf{Keywords}:~~bad control; causality; latent factor; potential outcomes; synthetic control\vfill

{\footnotesize\noindent\textbf{Acknowledgments:}~~The authors acknowledge financial support from CAPES, FAPESP (2023/01728-0) and CNPq (302153/2022-5). The usual disclaimer applies.}

\newpage\pagestyle{plain}
\section{Introduction}

There are many panel-regression methods that allow for time-varying unobserved heterogeneity in the estimation of average treatment effects on the treated units (ATT). A prominent class of such methods replaces parallel-trends assumptions with the restriction that potential outcomes follow a linear factor structure in the absence of treatment \cite{arkhangelsky2024causalmodelslongitudinalpanel}. Among the many alternatives in the literature, a natural one is the interactive fixed effects (IFE) estimator that \cn{bai2009panel} introduces in his seminal paper. However, it is not \textit{a priori} clear whether the IFE estimator indeed recovers the ATT under heterogeneous treatment effects.

There is a vast literature that investigates heterogeneous treatment effects in difference-in-differences (DiD) settings under parallel-trends assumptions. It shows that two-way fixed effects (TWFE) estimators could well recover nonconvex or negatively weighted averages of treatment effects in the presence of treatment-timing variation \cite{de2020two,callaway2021difference,goodman2021difference,sun2021estimating,borusyak2024revisiting}. In contrast, we know very little about the behavior of the estimators that rely on a factor structure for the potential outcomes when untreated under heterogeneous treatment effects. 

We fill this gap by studying the behavior of the IFE estimator in the presence of heterogeneous treatment effects. We show that, in empirically relevant settings, heterogeneity in treatment effects can induce a fundamental identification problem, so that the IFE estimator fails to recover the ATT or any interpretable average of treatment effects. The problem arises if the heterogeneous treatment effects admit a linear factor structure, as in the case that treatment effects depend on common latent shocks that affect all treated units with different intensity levels. For instance, the effect of a job-search assistance program may depend on aggregate unemployment conditions common to all individuals, while personal characteristics determine the magnitude of the response.

In this setting, the interaction between the treatment indicator and heterogeneous effects admits a linear factor structure. As a result, when using post-treatment outcomes of treated units to estimate factors as in the IFE estimator, part of the treatment effect might end up in the factor estimates. If the IFE estimator partials out variation due to the treatment, it would then fail to identify the ATT. There is a \textit{bad control} problem in that the IFE estimator cannot distinguish between latent factors governing potential outcomes when untreated (that it should control) and latent factors stemming from the heterogeneity in treatment effects (that it should not). As such, it absorbs both components, precluding their individual identification.

The issue becomes more severe if the heterogeneous treatment effect contains either time-invariant factors or unit-invariant loadings. In this case, the interaction between treatment status and the constant factor or loading is perfectly collinear with the treatment indicator, violating the rank conditions for identification in \cn{bai2009panel}. As a result, the IFE estimator is not necessarily consistent for the ATT and, in addition, it could even fail to converge to a unique probability limit.

\cn{xu2017generalized} argues that treatment-effect heterogeneity could well generate bias in the IFE estimator. However, he does not formally characterize when and why heterogeneity causes problems in such a setting. In contrast, we fully establish the conditions under which heterogeneity in treatment effects leads to inconsistency, thereby offering some intuition for the particular forms of heterogeneity that drive these failures. Moreover, we revisit the data-generating process that \cn{xu2017generalized} considers in his Monte Carlo simulations to illustrate concerns about heterogeneous treatment effects. We show that the IFE estimator is inconsistent under a static specification, but consistent under a dynamic specification. We also demonstrate that we recover consistency when the proportion of post-treatment periods or the proportion of treated units becomes negligible. This illustrates that heterogeneity in treatment effects does not necessarily imply inconsistent ATT estimates. Rather, whether inconsistency arises depends on several features, including the form of treatment-effect heterogeneity, the specification used for estimation, and the proportion of treated units and post-treatment periods.
 
The problems we identify are specific to estimation methods that use post-treatment outcomes of treated units to estimate the factor structure, such as the \pc{bai2009panel} IFE estimator. Alternative approaches, such as synthetic-control and imputation-based methods, do not suffer from this \textit{bad-control} problem, precisely because they exclude treated units in post-treatment periods from factor estimation. See, among others, \cn{abadie2010synthetic}, \cn{Doudchenko}, \cn{gobillon2016regional}, \cn{xu2017generalized}, \cn{amjad2018robust}, \cn{arkhangelsky2021synthetic}, \cn{athey2021matrix}, \cn{ben2021augmented}, \cn{ferman2021synthetic}, and \cn{porreca2022synthetic}.

Interestingly, the issue we raise here differs markedly from those affecting the TWFE estimator under heterogeneous treatment effects. In the latter case, the problem is that the estimator implicitly relies on comparisons between units at different stages of treatment exposure, including comparisons between units that are newly treated and units that remain treated over the same period. If treatment effects are heterogeneous, such comparisons could induce negative weighting of treatment effects in the presence of variation in treatment timing. In stark contrast, our point about the IFE estimator derives from a \textit{bad-control} issue. By using post-treatment outcomes of treated units to estimate the factor structure, the IFE estimator could eventually partial out components of the treatment effect itself even in settings without staggered treatment adoption.

We organize the rest of this paper as follows. Section~\ref{sec:Setting} introduces the framework. Section~\ref{sec:IFE} derives the main theoretical results. In particular, we first review the IFE estimator under homogeneous treatment effects in Section~\ref{sec:Homogenous} and then examine the case of heterogeneous treatment effects and the resulting \textit{bad-control} and collinearity issues in Section~\ref{sec:heterogenous}. Section~\ref{sec:MC.bad.control} reports some Monte Carlo simulations to illustrate our point, whereas Section~\ref{sec:Empirics} provides an empirical illustration. Section~\ref{sec:Conclusion} concludes.

\section{Setting} \label{sec:Setting}

Consider a panel-data setting in which we observe $i\in\{1,\ldots,N\}$ units at $t\in\{1,\ldots,T\}$ periods. We want to estimate the effect of a policy $d_{i,t}=d_id_t$, where $d_i=\bs{1}(i\in\mathcal{T})$ with $\mathcal{T}$ indicating the set of treated units and $d_t=\bs{1}(t>T_0)$ denoting an irreversible treatment that starts immediately after period $T_0<T$. In addition, we denote the number of post-treatment periods by $T_1=T-T_0<T$. We then define the potential outcome (PO) depending on whether unit $i$ is treated or not at time $t$: $y_{i,t}(1)$ and $y_{i,t}(0)$, respectively. In particular, we assume that the potential outcomes for untreated units admit a linear factor representation:
\begin{equation}\label{eq:potential.outcomes}
    y_{i,t}(0)=\bs{\lambda}_{0,i}'\bs{F}_{0,t}+e_{i,t}~~\mbox{and}~~y_{i,t}(1)=\alpha_{i,t}+y_{i,t}(0),
\end{equation}
where $\bs{F}_{0,t}$ and $\bs{\lambda}_{0,i}$ are $k_0\times 1$ vectors of deterministic factors and their corresponding loadings, $e_{i,t}$ is the idiosyncratic error for unit $i$ at time $t$, and $\alpha_{i,t}$ is the (possibly heterogeneous) treatment effect for unit $i$ at time $t$.

We consider a setting conditional on treatment assignment, so that we can treat the latter as fixed. We denote by $N_1$ and $N_0$ the number of treated and control units, respectively. We entertain only deterministic heterogeneous treatment effects, in that $\alpha_{i,t}$ is a fixed parameter, even if it potentially changes across units and over time \cite{alvarez2025inference}. The parameter of interest is the average treatment effect on treated units (ATT):
\begin{equation} \label{eq:alpha.bar}
    \bar\alpha=\frac{1}{N_1T_1}\sum_{i\in\mathcal{T}}\sum_{t=T_0+1}^T\alpha_{i,t}.
\end{equation}
The main challenge in the estimation of $\bar\alpha$ is that we observe realized (rather than potential) outcomes: namely, $y_{i,t}=(1-d_{i,t})\,y_{i,t}(0)+d_{i,t}\,y_{i,t}(1)$. We next summarize the sampling scheme and exogeneity conditions in Assumptions~\ref{A:sampling} and \ref{A:exogeneity}.

\begin{assumption}[Sampling]\label{A:sampling}
We observe the sample $\{y_{i,1},\ldots,y_{i,T}\}_{i=1}^N$, where we draw $y_{i,t}=(1-d_{i,t})\,y_{i,t}(0)+d_{i,t}\,y_{i,t}(1)$ from the PO model in \eqref{eq:potential.outcomes}, with $d_{i,t}$ taking value one if $i\in\mathcal{T}$ and $t>T_0$, zero otherwise. In addition, $\bs{\lambda}_{0,i}$, $\bs{F}_{0,t}$ and $\alpha_{i,t}$ are deterministic.
\end{assumption}
    
\begin{assumption}[Idiosyncratic errors] \label{A:exogeneity}
The idiosyncratic term $e_{i,t}$ is independent and identically distributed (iid) over time, with mean zero and finite fourth moment for all $i$.
\end{assumption}

While we consider a setting in which treatment allocation and factor structure are fixed, we could alternatively rewrite the framework as conditional on them. In this case, we would capture the characteristics of the treatment assignment mechanism by the distribution of potential outcomes conditional on treatment allocation. For example, if the units select into treatment based on unobservable (to the econometrician) factors, then the conditional distribution of these unobservables given treatment allocation depends on treatment status \ca{Ferman_JASA}{see discussion in}. In addition, we would have to formulate Assumption~\ref{A:exogeneity} in terms of conditional moments to ensure that treatment assignment is as good as random once we control for the factor structure. Finally, it is worth saying that we require iid errors just for convenience \ca{bai2009panel}{see, for instance, discussion in}.

The linear factor structure for the potential outcomes when untreated implies that the DiD estimator is not necessarily unbiased for $\bar\alpha$, except in special cases such as if $\bs{\lambda}_{0,i}'\bs{F}_{0,t}=\theta_i+\gamma_t$. The two-way fixed effects (TWFE) allows for confounders $\theta_i$ that are unit-specific but constant over time and time-varying confounders $\gamma_t$ that affect all units in the same way. This specification rules out the possibility of time-varying unobserved confounders that differ across units, though.

\section{ATT estimators} \label{sec:IFE}

In this section, we discuss the estimation of $\bar\alpha$ using the IFE estimator, as well as some alternative estimators that assume a linear factor model for the potential outcomes.

\subsection{Homogeneous treatment effects}\label{sec:Homogenous}

We first examine the identification and estimation of treatment effects using the IFE framework under the assumption that the treatment effect is homogeneous across units and periods: $\alpha_{i,t}=\alpha_0$ for all $i$ and $t$. Let now $\bs{Y}_i=(y_{i,1},\ldots,y_{i,T})'$, $\bs{D}_i=(d_{i,1},\ldots,d_{i,T})'$, $\bs{F}_0=(\bs{F}_{0,1}',\ldots,\bs{F}_{0,T}')'$, $\bs{\Lambda}_0=(\bs\lambda_{0,1}',\ldots,\bs\lambda_{0,N}')'$, and $\bs{e}_i=(e_{i,1},\ldots,e_{i,T})'$. We can then represent the PO model in \eqref{eq:potential.outcomes} as
\begin{equation}\label{eq:IFE.model}
    \bs{Y}_i=\alpha_0\bs{D}_i+\bs{F}_0\bs\lambda_{0,i}+\bs{e}_i.
\end{equation}

\cn{bai2009panel} defines the IFE estimator as
\begin{equation}\label{eq:SSE_min}
(\hat\alpha,\hat{\bs{F}})=\argmin_{(\alpha,\bs{F})}\mbox{SSE}(\alpha,\bs{F})=\sum_{i=1}^N(\bs{Y}_i-\alpha\bs{D}_i)'\bs{M}_{\bs{F}}(\bs{Y}_i-\alpha\bs{D}_i),
\end{equation}
where $\bs{M}_{\bs{F}}=\bs{I}_{T}-\bs{F}(\bs{F}'\bs{F})^{-1}\bs{F}'=\bs{I}_{T}-\bs{F}\bs{F}'/T$. It differs from the least-squares estimator only because it must also retrieve the latent factor structure. Implementation rests on an interactive procedure that estimates both the parameter of interest and the factor structure. \cn{bai2009panel} establishes consistency of $\hat\alpha$ for $\alpha_0$ as $N,T\rightarrow\infty$, under Assumptions~\ref{A:sampling} and~\ref{A:exogeneity}, and the regularity conditions~\ref{A:bai2009} in Appendix~\ref{sec:Proofs}. Although his result treats the dimension of the latent factors as known, \cn{moon2015linear} demonstrate that estimating the factor dimension does not affect the asymptotic behavior of $\hat\alpha$ as long as we do not underestimate the number of factors. 

There are alternative methods to estimate causal effects in the event that potential outcomes follow a linear factor structure. See, for instance, the synthetic-controls (SC) approach by \cn{abadie2010synthetic}, the generalized synthetic-controls (GSC) framework by \cn{xu2017generalized}, the demeaned synthetic-controls (DSC) estimator by \cn{Doudchenko} and \cn{ferman2021synthetic}, and the synthetic difference-in-differences (SDiD) approach by \cn{arkhangelsky2021synthetic}. In contrast to the IFE estimator, they do not rely on treated units in the post-treatment period to estimate the factor structure.

\subsection{Heterogeneous treatment effects} \label{sec:heterogenous}

We next relax the homogeneity assumption on the treatment effects by assuming they admit a linear factor representation: namely, $\alpha_{i,t}=\gamma+\bs\lambda_{\alpha,i}'\bs{F}_{\alpha,t}$, where $\gamma$ represents the common treatment effect, and $\bs{F}_{\alpha,t}$ and $\bs\lambda_{\alpha,i}$ denote the $k_\alpha-$vectors of latent factors and loadings. The main problem arises from the fact that the PO model under heterogeneous treatment effects admits a linear factor representation that combines both $\bs{F}_{0,t}$ and $\bs{F}_{\alpha,t}$: 
\begin{equation}\label{eq:IFE.bad.control}
\bs{Y}_i=\gamma\bs{D}_i+\bb{F}\bb{\lambda}_i+\bs{e}_i. 
\end{equation} 
This happens because $d_{i,t}\bs\lambda_{\alpha,i}'\bs{F}_{\alpha,t}+\bs\lambda_{0,i}'\bs{F}_{0,t}=(d_i\bs\lambda_{\alpha,i})' (d_t\bs{F}_{\alpha,t})+\bs\lambda_{0,i}'\bs{F}_{0,t}$, giving way to the extended factor representation  in~\eqref{eq:IFE.bad.control}, with $\bb F_t=(d_t\bs{F}_{\alpha,t}^\prime,\bs{F}_{0,t}^\prime)^\prime$ and $\bb\lambda_i=(d_{i}\bs\lambda_{\alpha,i}^\prime,\bs\lambda_{0,i}^\prime)^\prime$ such that $\bb\Lambda'\bb\Lambda$ is diagonal. We next impose some high-level conditions on this extended linear factor structure with factors $\bb F_t$ and loadings $\bb\lambda_i$.

\begin{assumption}[Extended Factor Structure]\label{A:heterogeneity.tv.factor}
The heterogeneity in the treatment effects follows the deterministic factor structure $\alpha_{i,t}=\gamma+\bs\lambda_{\alpha,i}'\bs{F}_{\alpha,t}$, where $\bs{F}_\alpha=(\bs{F}_{\alpha,1},\ldots,\bs{F}_{\alpha,T})'$ and $\bs\Lambda=(\bs\lambda_{\alpha,1},\ldots,\bs\lambda_{\alpha,N})'$. We assume that $\bb F_t=(d_t\bs{F}_{\alpha,t},\bs{F}_{0,t})$ and $\bb\lambda_i=(d_{i}\bs\lambda_{\alpha,i},\bs\lambda_{0,i})$ satisfy the following conditions:
\begin{enumerate}[(i)] 
\item $\bb F'\bb F/T=\bs I_{\breve{k}}$, with $\breve{k}\le k_0+k_\alpha$, and $\bb\Lambda'\bb\Lambda$ is a diagonal matrix. 
\item $\frac{1}{T}\sum_{t=1}^T\bb{F}_t'\bb{F}_t\to\bs{\Sigma}_{\bb F}$ as $T\to\infty$ and $\frac{1}{N}\sum_{i=1}^N \bb{\lambda}_i'\bb{\lambda}_i\to\bs{\Sigma}_{\bb \Lambda}$ as $N\to\infty$, where both $\bs\Sigma_{\bb F}$ and $\bs\Sigma_{\bb\Lambda}$ are positive definite matrices. Moreover, for a constant $M>0$, $||\bb{F}_{t}||\le M$ and  $||\bb{\lambda}_{i}||\le M$, where $||\bs A||=\sqrt{tr(\bs A'\bs A)}$ for any matrix $\bs A$.
\item ${\mathcal{\bb D}}(\bb F)=\frac{1}{NT}\sum_{i=1}^N\bs{D}_i'\bs{M}_{\bb F}\bs{D}_i-\frac{1}{N^2T}\sum_{i=1}^N\sum_{j=1}^N\bs{D}_i'\bs{M}_{\bb F}\bs{D}_j\bb\lambda_i'(\bb\Lambda'\bb\Lambda/N)^{-1}\bb\lambda_j>0$.
\end{enumerate} 
\end{assumption} 

Although we restrict attention to a deterministic factor structure, assuming a stochastic idiosyncratic component to the heterogeneity in treatment effects would not affect the results. It is just for simplicity that we do not entertain this case. Assumption~\ref{A:heterogeneity.tv.factor}$(i)$ is just a normalization, given that we can always find $\bb F$ and $\bb\Lambda$ that satisfy these conditions \cite{bai2009panel}. Note that $\breve{k}\le k_0+k_\alpha$ because the dimension of the linear subspace spanned by $\bb F$ is at most the sum of the dimensions of the linear subspaces spanned by $\bs F_0$ and $\bs F_\alpha$. For instance, if $Y(0)$ contains factors that are active only for treated units in post-treatment periods, some components of $(d_i\bs{\lambda}_{\alpha,i})'(d_t\bs{F}_{\alpha,t})$ could then lie in the factor space of $Y(0)$, thereby violating Assumption~\ref{A:heterogeneity.tv.factor}$(ii)$. We would have to drop the redundant factors so that the resulting extended factor structure $(\bb F,\bb\Lambda)$ would have a dimension smaller than $k_0+k_\alpha$. This does not affect the proof or the conclusions of Proposition~\ref{Prop:bad.control}, though.

Assumption~\ref{A:heterogeneity.tv.factor}$(ii)$ implies that factors and loadings in the extended factor representation asymptotically generate variation in the outcomes. This assumption would not hold, for example, if the proportion of treated units and/or periods becomes negligible as the sample size grows (see Remark~\ref{rmk:non-negligible}). Finally, Assumption~\ref{A:heterogeneity.tv.factor}$(iii)$ ensures that the extended factor structure is not collinear with the treatment dummy. We discuss in Remark~\ref{rmk:fixed.effect} two important cases in which this assumption could fail.

We next document that the minimization problem in \eqref{eq:SSE_min} does not result in a consistent estimator for $\bar\alpha$. 

\begin{proposition}[Bad Control]\label{Prop:bad.control}
Consider the PO model in \eqref{eq:potential.outcomes} and let Assumptions~\ref{A:sampling} to~\ref{A:heterogeneity.tv.factor} hold. It then follows that $\hat{\alpha}\xrightarrow{p}\gamma$ as $N,T\to\infty$.
\end{proposition}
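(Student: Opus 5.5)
The plan is to recast the heterogeneous-effects model as a standard homogeneous-effect IFE model and then invoke the consistency result of \cn{bai2009panel} directly. By \eqref{eq:IFE.bad.control}, the observed outcomes satisfy $\bs Y_i=\gamma\bs D_i+\bb F\bb\lambda_i+\bs e_i$ exactly. This is the model in \eqref{eq:IFE.model} with $\alpha_0$ replaced by $\gamma$, the factors $\bs F_0$ replaced by the extended factors $\bb F$, and the loadings $\bs\lambda_{0,i}$ replaced by the extended loadings $\bb\lambda_i$. The IFE estimator in \eqref{eq:SSE_min} uses only $(\bs Y_i,\bs D_i)$ and never sees the decomposition of the factor structure. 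Its probability limit therefore depends only on this representation, provided the number of estimated factors is $\breve k$, or at least $\breve k$ by \cn{moon2015linear}.

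The key steps are as follows.
\begin{enumerate}[(i)]
\item Verify that the pair $(\bb F,\bb\Lambda)$ satisfies the regularity conditions~\ref{A:bai2009} of \cn{bai2009panel}, as listed in Appendix~\ref{sec:Proofs}.
\begin{itemize}
\item The normalization $\bb F'\bb F/T=\bs I_{\breve k}$ with $\bb\Lambda'\bb\Lambda$ diagonal is Assumption~\ref{A:heterogeneity.tv.factor}$(i)$.
\item Convergence of the second moments of factors and loadings to positive definite limits, together with boundedness, is Assumption~\ref{A:heterogeneity.tv.factor}$(ii)$.
\item Bai's identification condition, that $\inf_{\bs F}\mathcal D(\bs F)>0$ over the set of admissible factor matrices, must be matched with Assumption~\ref{A:heterogeneity.tv.factor}$(iii)$.
\end{itemize}
\item Check the remaining conditions of~\ref{A:bai2009}.
\begin{itemize}
\item The regressor $\bs D_i$ is deterministic and bounded, so its moment conditions hold trivially.
\item The errors satisfy Assumption~\ref{A:exogeneity}. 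They are independent of the (deterministic) regressors and factors, which gives the exogeneity and weak-dependence requirements.
\end{itemize}
\item Apply Bai's Proposition~1 (consistency) to conclude $\hat\alpha\xrightarrow{p}\gamma$.
\end{enumerate}

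To make the argument self-contained, I would also sketch the core consistency steps in the new notation. Write $\mbox{SSE}(\alpha,\bs F)/(NT)$ minus its value at $(\gamma,\bb F)$ and show uniformly in $\bs F$ that
\[
\frac{1}{NT}\mbox{SSE}(\alpha,\bs F)-\frac{1}{NT}\sum_i\bs e_i'\bs e_i=\tilde S(\alpha,\bs F)+o_p(1).
\]
The cross terms $\frac1{NT}\sum_i\bs D_i'\bs M_{\bs F}\bs e_i$ and $\frac1{NT}\sum_i\bb\lambda_i'\bb F'\bs M_{\bs F}\bs e_i$ vanish uniformly, using $\|\bs e\|^2_{\mathrm{op}}=O_p(N+T)$ from the fourth moments. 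Here
\[
\tilde S(\alpha,\bs F)=\frac1{NT}\sum_i\bigl(\bs D_i(\gamma-\alpha)+\bb F\bb\lambda_i\bigr)'\bs M_{\bs F}\bigl(\bs D_i(\gamma-\alpha)+\bb F\bb\lambda_i\bigr).
\]
Completing the square in $\bb\lambda_i$ gives $\tilde S\ge(\gamma-\alpha)^2\,\mathbb D(\bs F)$, and then the quantity in Assumption~\ref{A:heterogeneity.tv.factor}$(iii)$ forces $\hat\alpha\to\gamma$.

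The main obstacle is the identification step. Assumption~\ref{A:heterogeneity.tv.factor}$(iii)$ is stated at $\bb F$, whereas Bai's argument requires the lower bound uniformly over all candidate $\bs F$ (an infimum). I would first try to read $(iii)$ as that uniform condition, or argue via the standard two-step route. In that route, $\tilde S\ge0$ with equality only if $\alpha=\gamma$ and $P_{\hat{\bs F}}$ is close to $P_{\bb F}$. I would establish the latter from the positive definiteness in $(ii)$, and then use continuity of $\mathbb D(\cdot)$ at $\bb F$. Once this is settled, consistency follows, and the limit is $\gamma$ rather than $\bar\alpha=\gamma+\frac{1}{N_1T_1}\sum\bs\lambda_{\alpha,i}'\bs F_{\alpha,t}$. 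The heterogeneous part has been absorbed into $\bb F\bb\lambda_i$, which is the bad-control conclusion.
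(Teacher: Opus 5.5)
There is a genuine gap, and it sits at the step you flag as the main obstacle; neither of your two resolutions goes through as stated.

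\textbf{Why both resolutions fail.} First, you cannot read Assumption~\ref{A:heterogeneity.tv.factor}(iii) as the uniform condition (a) of Assumption~\ref{A:bai2009}. For the same reason you cannot verify that condition and invoke Bai's Proposition~1. In this design $\inf_{\bs F\in\mathcal F}\bb D(\bs F)=0$ always. The treatment regressor has rank one, $\bs D_i=d_i\bs d_{1:T}$. Any admissible $\bs F$ whose column space contains $\bs d_{1:T}$ therefore has $\bs M_{\bs F}\bs D_i=\bs 0$ for all $i$, hence $\bb D(\bs F)=0$.

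Second, your two-step route has the right shape, but one of its steps does not follow. You cannot get ``$\bs P_{\hat{\bs F}}$ close to $\bs P_{\bb F}$ from the positive definiteness in (ii)'' while $\hat\beta=\gamma-\hat\alpha$ is still uncontrolled. The criterion is $\tilde S=\frac{1}{NT}\sum_i\|\bs M_{\bs F}(\beta\bs D_i+\bb F\bb\lambda_i)\|^2$. Its smallness only says that $\bs M_{\hat{\bs F}}\bb F\bb\lambda_i$ is nearly offset by $-\hat\beta d_i\bs M_{\hat{\bs F}}\bs d_{1:T}$. That is exactly the trade-off that must be ruled out: estimated factor directions spent absorbing the treatment dummy.

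\textbf{The missing idea} is the factorization the paper uses in Lemma~\ref{Lm:argmin_tildeS}. Since $\bs D_i=d_i\bs d_{1:T}$, one has $\bb D(\bs F)=A(\bs F)\breve B$, where $A(\bs F)=\bs d_{1:T}'\bs M_{\bs F}\bs d_{1:T}/T$ and $\breve B=\frac1N\sum_i d_i-\frac1{N^2}\sum_{i,j}d_id_j\bb\lambda_i'(\bb\Lambda'\bb\Lambda/N)^{-1}\bb\lambda_j$. The factor $\breve B$ does not depend on $\bs F$. Both factors are nonnegative, so the pointwise condition (iii) delivers $\breve B>0$ and $A(\bb F)>0$, which you need asymptotically bounded away from zero.

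The paper uses this to show that $\bb D(\bs F_*)=0$ would force $\bs M_{\bs F_*}\bs D_i=\bs 0$. That gives $\bs C_*=\bs 0$, hence $\bs M_{\bs F_*}\bb F=\bs 0$ and $\bs F_*=\bb F\bs H$, contradicting $\bb D(\bb F)>0$.

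In your quantitative version the steps run in this order:
\begin{enumerate}[(1)]
\item $\hat\beta^2A(\hat{\bs F})\breve B\le\tilde S=o_p(1)$ kills the dummy component.
\item The triangle inequality then gives $\frac{1}{NT}\sum_i\|\bs M_{\hat{\bs F}}\bb F\bb\lambda_i\|^2\le 2\tilde S+2\hat\beta^2A(\hat{\bs F})N_1/N=o_p(1)$.
\item Only now does (ii) yield $\|\bs P_{\hat{\bs F}}-\bs P_{\bb F}\|=o_p(1)$.
\item So $A(\hat{\bs F})-A(\bb F)=o_p(1)$, and therefore $\hat\beta=o_p(1)$.
\end{enumerate}

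\textbf{The number of factors.} Drop the appeal to \cn{moon2015linear} for an overestimated number of factors, because it fails here. With $\breve k+1$ factors you can take $\bs F$ spanning $(\bb F,\bs d_{1:T})$, which makes $\tilde S(\alpha,\bs F)=0$ for every $\alpha$. The argument therefore needs exactly $\breve k$ estimated factors.
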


Intuitively, the objective function increases with the heterogeneity among treated units, so the minimizer forces the factor structure to absorb it entirely. A direct implication of Proposition~\ref{Prop:bad.control} is that the IFE estimator $\hat\alpha$ is not a consistent estimator of the average treatment effect on the treated (ATT). A useful way to interpret this result is that heterogeneous treatment effects generate a factor structure that appears only for treated units in the post-treatment period. Because the IFE estimator searches for latent factors that explain systematic variation in outcomes, it attributes this pattern to additional factors rather than to treatment effects. As such, the factor structure we estimate absorbs part of the heterogeneous treatment effects, effectively acting as a bad control that removes this variation from the treatment effects. This explains why the IFE estimator converges to $\gamma$ rather than to the ATT parameter.

\begin{remark}[{Negligible treatments}]\label{rmk:non-negligible} Assumption~\ref{A:heterogeneity.tv.factor}(ii) rules out settings in which the proportion of post-treatment periods or the proportion of treated units are asymptotically negligible (i.e., $T_1/T\to 0$ or $N_1/N\to 0$, respectively). In such cases, the factor structure induced by heterogeneous treatment effects among treated units in post-treatment periods is asymptotically negligible. The IFE estimator does not absorb this systematic variation in outcomes through the factor structure, then. As a result, Proposition~\ref{Prop:bad.control} would no longer hold in view that the IFE estimator would not treat the heterogeneity in treatment effects as a bad control.
\end{remark}

\begin{remark}[Invariant factor structure]\label{rmk:fixed.effect} Assumption~\ref{A:heterogeneity.tv.factor}(iii) rules out either unit-invariant treatment effects (i.e., $\alpha_{i,t}=\psi_t$) or time-invariant factors (i.e., $\alpha_{i,t}=\phi_i$). In these cases, the interaction $d_{i,t}\alpha_{i,t}$ between treatment status and heterogeneous treatment effects would admit a linear factor structure that is collinear with the treatment indicator itself. As such, the objective function in \eqref{eq:SSE_min} might have multiple minima, so that the distribution of $\hat{\alpha}$ could well become multimodal in finite samples. Intuitively, once we partial out the factors, little variation remains in the treatment dummy (see simulations in Appendix~\ref{App:sim.time-invariant}).
\end{remark}

\begin{remark}[Dynamic IFE specification]\label{rmk:dynamic} To allow treatment effects to vary over time, one could think of interacting the treatment dummy with indicators for each post-treatment period. If the heterogeneity in treatment effects takes the form $\alpha_{i,t}=\psi_t$, we are essentially back to homogeneous treatment effects if we consider the dynamic specification $y_{i,t}=\sum_{s>T_0}\psi_s\bs{1}(t=s)d_{i,t}+\bs\lambda_{0,i}'\bs F_{0,t}+e_{i,t}$ and, as such, the IFE estimator is consistent for $(\psi_{T_0+1},\ldots,\psi_T)$. As before, adding a purely idiosyncratic term $a_{i,t}$ to the heterogeneity does not make any difference given that, by definition, idiosyncratic components do not admit a factor representation. Accordingly, under this dynamic specification, bad controls ensue only if the heterogeneity in treatment effects has a linear factor structure beyond $\psi_t$ (see simulations in Appendix~\ref{App:sim.dynamic}).
\end{remark}

\pc{xu2017generalized} simulations illustrate potential problems with the IFE estimator under heterogeneous treatment effects of the form $\alpha_{i,t}=\psi_t+a_{i,t}$, where $a_{i,t}$ is purely idiosyncratic. An interesting implication of Remark~\ref{rmk:dynamic} is that, while the static IFE estimator is indeed inconsistent under this data-generating process (DGP), the IFE estimator remains consistent in the dynamic specification. This highlights that whether heterogeneous treatment effects lead to inconsistency depends not only on the form of the heterogeneity, but also on the specification used for estimation.

\begin{remark}[Alternative causal estimators]\label{rmk:other estimators}
Estimation methods that do not use treated units in the post-treatment periods to estimate the factor structure are not subject to these problems. This is the case for the SC, GSC, DSC and SDiD estimators, which estimate the ATT as $\frac{1}{N_1T_1}\sum_{i\in\mathcal{T}}\sum_{t=T_0+1}^T\big(y_{i,t}-\hat y_{i,t}(0)\big)$, where $\hat y_{i,t}(0)$ is the corresponding counterfactual. Heterogeneity in treatment effects does not affect $\hat y_{i,t}(0)$ because it does not employ information on treated units in post-treatment periods. In our setting, $y_{i,t}-\hat y_{i,t}=\gamma+\bs{\lambda}_{\alpha,i}'\bs{F}_{\alpha,t} +y_{i,t}(0)-\hat y_{i,t}(0)$. As such, these methods are able to recover the ATT even in the presence of treatment effect heterogeneity, provided that we can accurately estimate the counterfactual outcomes for the treated units in the post-treatment periods.
\end{remark}

\section{Simulations}\label{sec:MC.bad.control}

To illustrate how factors affect the estimation of ATT under heterogeneous treatment effects, we carry out a Monte Carlo study using the PO model in \eqref{eq:potential.outcomes}, with \eqref{eq:alpha.bar} as the parameter of interest. We entertain two data generating processes. The first considers a homogeneous treatment effect in which $\alpha_{i,t}=2$ for all units and time periods, implying $\bar \alpha = 2$. The second features heterogeneous treatment effects $\alpha_{i,t}=\gamma+\bs\lambda_{\alpha,i}'\bs{F}_{\alpha,t}$, with $\gamma=1$ and $\frac{1}{N_1T_1}\sum_{i\in\mathcal{T}} \sum_{t=T_0+1}^T\bs\lambda_{\alpha,i}'\bs{F}_{\alpha,t}=1$, so that $\bar\alpha=2$. For more details on the data generating processes and the factor structure of the heterogeneous treatment effects, see the notes of Table~1 as well as Appendix~\ref{App:setting.sim}.

We consider five ATT estimators: IFE, SC, DSC, GSC, and SDiD. They are all consistent for the DGP with homogeneous treatment effects. Under heterogeneous treatment effects, however, the IFE estimator captures only the common treatment-effect component $\gamma$, while the PO factor structure soaks up $\bs\lambda_{\alpha,i}'\bs{F}_{\alpha,t}$ as a bad control (see Section~\ref{sec:heterogenous}). In contrast, this problem does not arise for the other estimators, since they do not use treated units in post-treatment periods to estimate the factor structure (see Remark~\ref{rmk:other estimators}).

Table~\ref{tab:simulation_bad_control} displays the simulation results. For the DGP with homogeneous treatment effects, all estimators are, on average, close to the ATT $\bar\alpha=2$. The IFE estimator performs slightly better in terms of variance relative to the other alternatives. This should not come as a surprise, since it uses more information: namely, the treated units in the post-treatment periods—to estimate the factor structure. Importantly, the assumption that treatment effects are homogeneous is crucial for the IFE estimator to recover only the factor structure of the untreated potential outcomes, rather than inadvertently controlling for part of the treatment effect.

Once we turn to heterogeneous treatment effects, the picture changes dramatically. The IFE estimator completely fails to recover the ATT parameter $\bar\alpha=2$, retrieving instead $\gamma=1$. Following the intuition from Section~\ref{sec:heterogenous}, the treatment-effect component $\bs\lambda_{\alpha,i}'\bs{F}_{\alpha,t}$ is now absorbed into the PO factor structure rather than into the treatment effect as it should. In contrast, the alternative estimators remain close to the true ATT value of $\bar\alpha=2$.

\section{Empirical illustration}\label{sec:Empirics}

To illustrate empirically the \textit{bad-control} issue in the IFE estimation, we revisit the study by \cn{gobillon2016regional} about the effect of the Enterprise Zone Program on the unemployment rate. In particular, we assess how the Enterprise Zone Program affects the exit rates from unemployment, either to a job or for unknown reasons, and the entry rate into unemployment (all on a logarithmic scale). The analysis is for the Paris region, with filters to match the sample in \cn{gobillon2016regional}, resulting in 135 control units and 13 treated units for 7 pre-treatment and 13 post-treatment periods.  

\input{Tables/Comparison}

For each outcome, we compare the IFE estimator of $\bar\alpha$ with the alternative SC, DSC, GSC and SDiD estimators. Although the latter approaches yield robust estimators to heterogeneous treatment effects, they rely on different assumptions for consistency. As such, they could well converge to different probability limits in the event some of these assumptions fail. For instance, the SC, DSC, and SDiD approaches impose convex weighting constraints on the factor loadings. In particular, the standard SC estimator is the most restrictive, calling for treated units' factor loadings lying in the convex hull of the donors’ loadings for both time-variant and time-invariant factors. Both DSC and SDiD approaches relax this assumption by previously removing time-invariant fixed effects, so that treated units must lie in the convex hull of the factor loadings, excluding those corresponding to time-invariant factors. In view that the GSC estimator does not require convex-hull assumptions, we take it as the main benchmark to assess whether the IFE estimator is indeed recovering the ATT parameter. 

\input{Tables/Empirical}

Table~\ref{tab:empirical} summarizes our empirical findings. We conduct inference by employing a permutation-based placebo test under the sharp null hypothesis of no treatment effect. This means that, at each permutation, we randomly select $N_1$ units as pseudo-treated before computing the ATT estimators and their corresponding test statistics. We denote the latter by $\theta_r^k$ for $k\in\{\text{IFE},\text{SC},\text{DSC}, \text{GSC},\text{SDiD}\}$ and $r\in\{1,\ldots,R\}$. We then compute the p-value of $\theta_r^k$ using the empirical distribution across $R=10{,}000$ permutations, i.e., $\frac{1}{R}\sum_{r=1}^R\mathbb{I}(\theta_r^k>\theta_{\bar{\alpha}}^k)$, where $\theta_{\bar{\alpha}}^k$ is the sample value of the test statistic using the observed data. For the SC and DSC estimators, we use the mean squared prediction error ratio as the test statistic to adjust for imperfect pre-treatment fit \cite{abadie2010synthetic,abadie2015comparative,firpo2018synthetic}. In turn, we consider the absolute treatment effect $|\hat{\alpha}_r^{k}|$ as the test statistic for the other estimators.

For the exit rate to a job (Column 1), the IFE estimator departs from all other estimators, indicating a significant positive ATT in contrast to the negative signs of the other ATT estimates. In particular, it differs sharply from the GSC, which suggests the presence of treatment effect heterogeneity. It is also interesting to observe that the GSC approach yields a much more prominent ATT than the SC, DSC and SDiD estimators, perhaps suggesting that the convex-hull conditions do not hold.

The latter does not seem a problem for the estimation of the ATT of the Enterprise Zone Program on the exit rate for unknown reasons in that most estimates are close to 5\% and mostly significant (Column 2). The only exception is the IFE approach, which yields an ATT estimate very close to zero. This pattern strongly suggests that the convex hull restrictions for the validity of the SC, DSC, and SDiD estimators hold, whereas the heterogeneous treatment effects prevent the IFE estimator from recovering the ATT. Finally, IFE and GSC estimators for entry rate are close to zero and not statistically significant at standard significance levels, which might suggest that treatment effects are homogeneous and equal to zero for this outcome (column 3).

\section{Conclusion}\label{sec:Conclusion}

We examine the behavior of \pc{bai2009panel} IFE estimator under heterogeneous treatment effects. We find that, if the heterogeneity in treatment effects admits a linear factor representation, the IFE estimator fails to recover the average treatment effect on the treated. This happens because the interactive fixed effects in the PO model absorb the factors in the heterogeneity, thereby becoming a bad control in that it accounts for too much. We also show that, in the presence of time-invariant components in the heterogeneous treatment effects, identification breaks down due to multicollinearity. In contrast, these issues do not arise for alternative estimators that are robust to heterogeneous treatment effects, essentially because they exclude treated units in post-treatment periods from the factor estimation in the PO model.

Empirically, we revisit \pc{gobillon2016regional} analysis about how the Enterprise Zone Program affects the unemployment rate in the Paris region. The IFE estimator yields very different results relative to the generalized synthetic-control approach  in two out of three outcomes analyzed, indicating that it is likely assigning the heterogeneity factors to the interactive fixed effects in the PO model, rather than in the treatment effect.

\lineskip=1ex \baselineskip 3.5ex
\bibliography{reference}

\baselineskip 4.5ex
\appendix
\setcounter{assumption}{0}
\renewcommand{\theassumption}{\thesection.\arabic{assumption}}
\setcounter{lemma}{0}
\renewcommand{\thelemma}{\thesection.\arabic{lemma}}
\setcounter{figure}{0}
\renewcommand{\thefigure}{A.\arabic{figure}}
\setcounter{table}{0}
\renewcommand{\thetable}{A.\arabic{table}}

\section{Proofs} \label{sec:Proofs}

We next establish all technical proofs of the paper in this section. We first consider the standard factor structure with time-varying factors and loadings that change across units.

To establish consistency for the IFE estimator with homogeneous treatment effects, we rely on the regularity conditions in Assumption~\ref{A:bai2009}, which are very similar to \pc{bai2009panel} Assumptions~A to~D. The main differences lie in our simplifying assumptions of independent and identically distributed errors and of deterministic factor structures. The latter are just for convenience \ca{bai2009panel}{see discussion in}. Let now $M$ denote a finite constant, $||\bs A||=\sqrt{\trace(\bs A'\bs A)}$, and $\mathcal{D}(\bs F)=\frac{1}{NT}\sum_{i=1}^N\bs{D}_i'\bs{M}_{\bs F}\bs{D}_i-\frac{1}{N^2T}\sum_{i=1}^N\sum_{j=1}^N\bs{D}_i'\bs{M}_{\bs F}\bs{D}_j\,a_{ij}$ with $a_{ij}=\bs\lambda_{0,i}'(\bs\Lambda_0'\bs\Lambda_0/N)^{-1}\bs\lambda_{0,j}$.\vskip 1em

\begin{assumption}[Regularity Conditions]\label{A:bai2009}
The following conditions hold:
    \begin{enumerate}[\textbf{(\alph*)}]
        \item $\inf_{\bs F\in\mathcal{F}}\mathcal{D}(\bs F)>0$, with $\mathcal{F}=\{\bs F:~\bs F'\bs F/T=\bs{I}_T\}$.
        \item\vspace*{-.2cm} $||\bs{F}_{0,t}||\le M$ and $\frac{1}{T}\sum_{t=1}^T\bs{F}_{0,t}\bs{F}_{0,t}'\to\bs{\Sigma_{F_0}}>\bs 0$ for some $k_0\times k_0$ matrix $\bs{\Sigma_{F_0}}$ as $T\to\infty$, whereas $||\bs\lambda_{0,i}||\le M$ and $\frac{1}{N}\sum_{i=1}^N\bs\lambda_{i,0}\bs\lambda_{i,0}'\to\bs{\Sigma_{\Lambda_0}}>\bs 0$ for some $k_0\times k_0$ matrix $\bs{\Sigma_{\Lambda_0}}$ as $N\rightarrow\infty$.
        \item\vspace*{-.2cm} $e_{i,t}$ is independent and identically distributed, with $\E(e_{i,t}^4)<\infty$.
    \end{enumerate}
\end{assumption}\vskip 1em

We now consider a series of lemmata in order to establish the proof of Proposition \ref{Prop:bad.control}.  Recall that $\mbox{SSE}(\alpha,\bs F)=\sum_{i=1}^N(\bs{Y}_i-\bs\alpha\bs{D}_i)'\bs M_F (\bs{Y}_i-\bs\alpha\bs{D}_i)$ and that $\bs{M}_{\bs F}=\bs{I}_{T}-\bs P_{\bs F}=\bs{I}_{T}-\bs F(\bs F'\bs F)^{-1}\bs F'=\bs{I}_{T}-\bs F\bs F'/T$.

\begin{lemma}\label{Lm:SSE}
    Consider the PO model in \eqref{eq:potential.outcomes}, with $k_0$ known and fixed. It then follows uniformly that
    \begin{eqnarray*}
        \mbox{SSE}(\alpha,\bs F)&=&\sum_{i=1}^N\bs{D}_i'\bs A_i\bs{M}_{\bs F}\bs A_i\bs{D}_i+\sum_{i=1}^N\bs\lambda_{0,i}'\bs{F}_0'\bs{M}_{\bs F}\bs{F}_0\bs\lambda_{0,i}+2\sum_{i=1}^N\bs{D}_i' \bs{A}_i'\bs{M}_{\bs F}\bs{F}_0\bs\lambda_{0,i}\\
        &&+\,\sum_{i=1}^N\bs{e}_i'\bs{M}_{\bs F}\bs{e}_i+o_p(NT)
    \end{eqnarray*}
    where $\bs A_i=\diag(\bs\alpha_i)-\alpha\bs{I}_T$.
\end{lemma}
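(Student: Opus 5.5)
We plan to obtain the expansion by substituting the model directly into the objective. Write $\bs\alpha_i=(\alpha_{i,1},\ldots,\alpha_{i,T})'$. By \eqref{eq:potential.outcomes} and Assumption~\ref{A:sampling}, $\bs Y_i=\diag(\bs\alpha_i)\bs D_i+\bs F_0\bs\lambda_{0,i}+\bs e_i$, so
\[
\bs Y_i-\alpha\bs D_i=\bs A_i\bs D_i+\bs F_0\bs\lambda_{0,i}+\bs e_i,
\]
where $\bs A_i=\diag(\bs\alpha_i)-\alpha\bs I_T$. This matrix is diagonal and hence symmetric, and it commutes with nothing else we need. Expanding the quadratic form $(\cdot)'\bs M_{\bs F}(\cdot)$ produces six pieces:
\begin{itemize}
\item[] the three squares $\bs D_i'\bs A_i\bs M_{\bs F}\bs A_i\bs D_i$, $\bs\lambda_{0,i}'\bs F_0'\bs M_{\bs F}\bs F_0\bs\lambda_{0,i}$ and $\bs e_i'\bs M_{\bs F}\bs e_i$;
\item[] the deterministic cross term $2\bs D_i'\bs A_i'\bs M_{\bs F}\bs F_0\bs\lambda_{0,i}$;
\item[] the two stochastic cross terms $2\bs D_i'\bs A_i\bs M_{\bs F}\bs e_i$ and $2\bs\lambda_{0,i}'\bs F_0'\bs M_{\bs F}\bs e_i$.
\end{itemize}
The first three squares and the deterministic cross term are exactly the four displayed terms. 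The lemma therefore reduces to showing that, uniformly over $\bs F$ with $\bs F'\bs F/T=\bs I$ and over $\alpha$ in a compact set, the two stochastic cross terms summed over $i$ are $o_p(NT)$.

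We handle each stochastic cross term by writing $\bs M_{\bs F}=\bs I_T-\bs F\bs F'/T$.

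For the identity part, $\sum_i\bs D_i'\bs A_i\bs e_i=\sum_i\sum_t d_{i,t}(\alpha_{i,t}-\alpha)e_{i,t}$. This is linear in $\alpha$ and involves no $\bs F$. Its mean is zero and, by Assumption~\ref{A:exogeneity}, its variance is $O(NT)$ because the $\alpha_{i,t}$ are bounded. Hence it is $O_p(\sqrt{NT})(1+|\alpha|)=o_p(NT)$ uniformly on compacts in $\alpha$. The term $\sum_i\bs\lambda_{0,i}'\bs F_0'\bs e_i$ is treated the same way, using $\|\bs\lambda_{0,i}\|,\|\bs F_{0,t}\|\le M$ from Assumption~\ref{A:bai2009}(b).

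For the projection part, we bound uniformly in $\bs F$ using the Cauchy--Schwarz argument of \citeasnoun{bai2009panel}. For instance,
\[
\Big|\sum_i\bs\lambda_{0,i}'\bs F_0'\tfrac{\bs F\bs F'}{T}\bs e_i\Big|
=\Big|\trace\Big(\tfrac{\bs F_0'\bs F}{T}\,\bs F'\sum_i\bs e_i\bs\lambda_{0,i}'\Big)\Big|
\le \Big\|\tfrac{\bs F_0'\bs F}{T}\Big\|\,\|\bs F\|\,\Big\|\bs E'\bs\Lambda_0\Big\|,
\]
where $\bs E=(\bs e_1,\ldots,\bs e_N)$. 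Here $\|\bs F_0'\bs F/T\|=O(1)$ and $\|\bs F\|=\sqrt{kT}$ uniformly on the constraint set. Moreover, $\|\bs E'\bs\Lambda_0\|^2=\sum_t\|\sum_i e_{i,t}\bs\lambda_{0,i}\|^2=O_p(NT)$. The product is therefore $O_p(\sqrt T\sqrt{NT})=O_p(N^{1/2}T)=o_p(NT)$.

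The same bound applies to $\sum_i\bs D_i'\bs A_i\bs F\bs F'\bs e_i/T$. Write it as $\trace(\bs F'\sum_i\bs e_i\bs D_i'\bs A_i\bs F)/T$, bound $\|\bs A_i\bs D_i\|\le C\sqrt T$, and use $\|\bs F'\bs e_i\|$ jointly through $\|\bs E\bs F\|\le\|\bs E\|\,\|\bs F\|$ together with $\|\bs E\|_{op}=O_p(\sqrt N+\sqrt T)$. This gives order $(\sqrt N+\sqrt T)\sqrt{NT}=o_p(NT)$.

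The main obstacle is this uniformity over $\bs F$ in the stochastic cross terms involving $\bs F\bs F'$, since $\bs F$ ranges over a non-compact, $T$-dimensional set. We will rely on the operator-norm bound $\|\bs E\|_{op}=O_p(\max(\sqrt N,\sqrt T))$ for iid errors with finite fourth moments, as in \citeasnoun{bai2009panel} and \citeasnoun{moon2015linear}, to make the bound free of $\bs F$. Combining the pieces yields the stated expansion with an $o_p(NT)$ remainder holding uniformly.
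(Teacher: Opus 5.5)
Your proof is correct and follows the paper's route. Both use the same six-term expansion of $(\bs A_i\bs D_i+\bs F_0\bs\lambda_{0,i}+\bs e_i)'\bs M_{\bs F}(\bs A_i\bs D_i+\bs F_0\bs\lambda_{0,i}+\bs e_i)$ and show the two stochastic cross terms are uniformly $o_p(NT)$. The paper simply invokes Lemma~A.1 of \cn{bai2009panel} for the bounded regressor $\bs A_i\bs D_i$. You instead re-derive that lemma via the split $\bs M_{\bs F}=\bs I_T-\bs F\bs F'/T$ and an operator-norm bound. You also make the uniformity over compact sets of $\alpha$ explicit, which the paper's bound $|x_{i,t}|\le M+|\alpha|$ leaves implicit.

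Only cosmetic fixes are needed. Since $\bs E$ is $T\times N$, the relevant products are $\bs E\bs\Lambda_0$ and $\bs E'\bs F$, not $\bs E'\bs\Lambda_0$ and $\bs E\bs F$. Your $O(NT)$ variance bounds also require the cross-sectional independence in Assumption~\ref{A:bai2009}(c), not Assumption~\ref{A:exogeneity} alone.
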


\begin{proof}[Proof of Lemma~\ref{Lm:SSE}]
    We rewrite the PO model in \eqref{eq:potential.outcomes} as $\bs{Y}_i=\diag(\bs\alpha_i)\bs{D}_i+\bs{F}_0\bs\lambda_{0,i}+\bs{e}_i$ and then subtract $\alpha\bs{D}_i$ from both sides to obtain $\bs{Y}_i-\alpha\bs{D}_i=\bs A_i\bs{D}_i+\bs{F}_0\bs\lambda_{0,i}+\bs{e}_i$. It then follows that 
    \begin{eqnarray*}
    \mbox{SSE}(\alpha,\bs F) %
    &=&\sum_{i=1}^N\left(\bs{Y}_i-\alpha\bs{D}_i\right)'\bs{M}_{\bs F}\left(\bs{Y}_i-\alpha\bs{D}_i\right)\\
    &=&\sum_{i=1}^N(\bs{A}_i\bs{D}_i+\bs{F}_0\bs\lambda_{0,i}+\bs{e}_i)'\bs{M}_{\bs F}(\bs{A}_i\bs{D}_i+\bs{F}_0\bs\lambda_{0,i}+\bs{e}_i)\\
    &=&\sum_{i=1}^N\bs{D}_i'\bs A_i'\bs M_{\bs F}\bs A_i\bs{D}_i+\sum_{i=1}^N\bs\lambda_{0,i}'\bs{F}_0'\bs M_{\bs F}\bs{F}_0\bs\lambda_{0,i}+\sum_{i=1}^N\bs{e}_i'\bs M_{\bs F}\bs{e}_i\\
    &&+\,2\sum_{i=1}^N\bs{D}_i'\bs{A}_i'\bs M_{\bs F}\bs{F}_0\bs\lambda_{0,i}+2\sum_{i=1}^N\bs{D}_i'\bs{A}_i'\bs{M}_{\bs F}\bs{e}_i+2\sum_{i=1}^N\bs\lambda_{0,i}'\bs{F}_0'\bs M_{\bs F}\bs{e}_i.
    \end{eqnarray*}
    Let $\bs X_i=\bs A_i\bs D_i$, with elements $x_{i,t}=(\alpha_{i,t}-\alpha)\,d_{i,t}$ for $i=1,\ldots,N$ and $t=1,\ldots,T$. By the triangle inequality, $|\alpha_{i,t}-\alpha|\le|\alpha_{i,t}|+|\alpha|\le M+|\alpha|$ by Assumption~\ref{A:heterogeneity.tv.factor}, which ensures $|x_{i,t}|=|\alpha_{i,t}-\alpha|\,|d_{i,t}|\le M+|\alpha|$ given that $0\le d_{i,t}\le 1$. Lemma~A.1 in \cn{bai2009panel} thus applies to $\bs X_i$, and hence $\sup_{\bs F\in\mathcal{F}}\left\|\frac{1}{NT}\sum_{i=1}^N\bs X_i'\bs{M}_{\bs{F}}\bs{e}_i\right\|$ and $\sup_{\bs F\in\mathcal{F}}\left\|\frac{1}{NT} \sum_{i=1}^N\bs\lambda_{0,i}'\bs F_0'\bs{M}_{\bs F}\bs{e}_i\right\|$ are both of order $o_p(1)$, giving way to the desired expression for $\mbox{SSE}(\alpha,\bs F)$.
\end{proof}

We now define $S_{NT}(\alpha,\bs F)$ as the centered objective function:
\begin{displaymath}
S_{NT}(\alpha,\bs F)=\frac{1}{NT}\,\mbox{SSE}(\alpha,\bs F)-\frac{1}{NT}\sum_{i=1}^N\bs e_i'\bs M_{\bb F}\bs e_i.
\end{displaymath}
This does not alter the underlying minimization problem, as the second term in $S_{NT}(\alpha, \bs F)$ has no impact given it is independent of $\alpha$ and $\bs F$. Lemmata~\ref{Lm:tildeS} to~\ref{Lm:argmin_tildeS} establish the main arguments for the proof of Proposition~\ref{Prop:bad.control}. The first result shows that $S_{NT}(\alpha,\bs F)$ converges uniformly to $\tilde S_{NT}(\alpha,\bs F)$ for some bounded set of $(\alpha,\bs F)$, whereas the last two demonstrate that the minimum of $\tilde S_{NT}(\alpha,\bs F)$ is at $(\gamma,\bb F)$.  

\begin{lemma}\label{Lm:tildeS}
    The difference between $S_{NT}(\alpha,\bs F)$ and
    \begin{displaymath}
    \tilde S_{NT}(\alpha,\bs F)=\frac{1}{NT}\sum_{i=1}^N\Big(\bs{D}_i'\bs A_i\bs{M}_{\bs F}\bs A_i\bs{D}_i+\bs\lambda_{0,i}' \bs{F}_0'\bs{M}_{\bs F}\bs{F}_0 \bs\lambda_{0,i}+2\bs{D}_i'\bs A_i\bs{M}_{\bs F}\bs{F}_{0}\bs\lambda_{0,i}\Big)
    \end{displaymath}
    is uniformly small in probability: i.e., $S_{NT}(\alpha,\bs F)-\tilde S_{NT}(\alpha,\bs F)=o_p(1)$.
\end{lemma}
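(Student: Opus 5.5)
The plan is to read Lemma~\ref{Lm:tildeS} off Lemma~\ref{Lm:SSE}, with one extra step to deal with the centering term. By definition,
\begin{displaymath}
S_{NT}(\alpha,\bs F)-\tilde S_{NT}(\alpha,\bs F)=\frac{1}{NT}\Big(\mbox{SSE}(\alpha,\bs F)-\sum_{i=1}^N\big(\bs{D}_i'\bs A_i\bs{M}_{\bs F}\bs A_i\bs{D}_i+\bs\lambda_{0,i}'\bs{F}_0'\bs{M}_{\bs F}\bs{F}_0\bs\lambda_{0,i}+2\bs{D}_i'\bs A_i\bs{M}_{\bs F}\bs{F}_0\bs\lambda_{0,i}\Big)-\frac{1}{NT}\sum_{i=1}^N\bs e_i'\bs M_{\bb F}\bs e_i .
\end{displaymath}
Lemma~\ref{Lm:SSE} says that the bracketed difference equals $\sum_i\bs e_i'\bs M_{\bs F}\bs e_i+o_p(NT)$, uniformly over $\bs F\in\mathcal F$ and over $\alpha$ in a bounded set. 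It therefore suffices to show that
\begin{displaymath}
\sup_{\bs F\in\mathcal F}\frac{1}{NT}\Big|\sum_{i=1}^N\bs e_i'\bs M_{\bs F}\bs e_i-\sum_{i=1}^N\bs e_i'\bs M_{\bb F}\bs e_i\Big|=o_p(1).
\end{displaymath}
Note that $\bs A_i$ is symmetric, so $\bs A_i'=\bs A_i$ and the cross term written in Lemma~\ref{Lm:SSE} matches the one in $\tilde S_{NT}$.

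To prove this, write $\bs M_{\bs F}-\bs M_{\bb F}=\bs P_{\bb F}-\bs P_{\bs F}$. The difference then splits into two pieces, $\frac{1}{NT}\sum_i\bs e_i'\bs P_{\bb F}\bs e_i$ and $\sup_{\bs F}\frac{1}{NT}\sum_i\bs e_i'\bs P_{\bs F}\bs e_i$.

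For the second piece, use $\bs P_{\bs F}=\bs F\bs F'/T$ to get
\begin{displaymath}
\frac{1}{NT}\sum_i\bs e_i'\bs P_{\bs F}\bs e_i=\frac{1}{T^2N}\trace\Big(\bs F'\Big(\sum_i\bs e_i\bs e_i'\Big)\bs F\Big)\le \frac{r}{T}\,\Big\|\frac{1}{N}\sum_i\bs e_i\bs e_i'\Big\|_{\mathrm{op}},
\end{displaymath}
where $r$ is the (fixed) number of columns of $\bs F$. This bound is uniform in $\bs F$. This is the standard argument in Bai (2009, proof of Prop.~1): the largest eigenvalue of $\bs e\bs e'/N$, with $\bs e$ the $T\times N$ error matrix, is $O_p(\max(N,T)/N)$. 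That holds under iid errors with finite fourth moments (Assumption~\ref{A:exogeneity}), via $\|\bs e\|_{\mathrm{op}}^2=O_p(\max(N,T))$ (a Latała/Bai--Yin type bound). Consequently the second piece is $O_p(1/\min(N,T))=o_p(1)$.

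The first piece is simpler, since $\bb F$ is fixed. Its expectation is $\trace(\bs P_{\bb F})\sigma^2/T=\breve k\sigma^2/T\to0$, and Markov's inequality finishes it.

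The main obstacle is the uniform bound on the operator norm. I would either invoke Bai (2009) Lemma~A.1 and the discussion around it directly, or fall back on the cruder bound $\|\frac1N\sum_i\bs e_i\bs e_i'\|_{\mathrm{op}}\le\|\frac1N\sum_i(\bs e_i\bs e_i'-\sigma^2\bs I_T)\|+\sigma^2$. The Frobenius norm of $\frac1N\sum_i(\bs e_i\bs e_i'-\sigma^2\bs I_T)$ is $O_p(T/\sqrt N)$ by fourth moments. That makes the piece $O_p(1/\sqrt N+1/T)$, which already suffices. Finally, I would note that the $o_p(NT)$ remainder in Lemma~\ref{Lm:SSE} is uniform over $\alpha$ in compact sets, because the cross terms are linear in $\alpha$ with coefficients that are uniformly $o_p(1)$. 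Combining the two pieces with that uniformity yields $S_{NT}-\tilde S_{NT}=o_p(1)$ uniformly.
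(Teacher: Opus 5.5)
Your proposal is correct and follows essentially the same route as the paper: invoke Lemma~\ref{Lm:SSE} to reduce $S_{NT}-\tilde S_{NT}$ to $\frac{1}{NT}\sum_{i=1}^N\bs e_i'(\bs M_{\bs F}-\bs M_{\bb F})\bs e_i+o_p(1)$, and then dispose of this term with the uniform bound $\sup_{\bs F\in\mathcal F}\frac{1}{NT}\sum_{i=1}^N\bs e_i'\bs P_{\bs F}\bs e_i=o_p(1)$ from Lemma~A.1 of Bai (2009). The differences are only in detail: the paper covers the $\bs P_{\bb F}$ piece by noting that $\bb F\in\mathcal F$, so the same supremum bound applies, rather than by your Markov argument, and it does not spell out the operator-norm argument behind Lemma~A.1 or the uniformity in $\alpha$ as you do.
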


\begin{proof}[Proof of Lemma~\ref{Lm:tildeS}]
    It readily follows from Lemma~\ref{Lm:SSE} that
    \begin{eqnarray*}
        S_{NT}(\alpha,\bs F)&=&\frac{1}{NT}\,\sum_{i=1}^N\bs{D}_i'\bs A_i\bs{M}_{\bs F}\bs A_i\bs{D}_i+\frac{1}{NT}\sum_{i=1}^N\bs\lambda_{0,i}'\bs{F}_0'\bs{M}_{\bs F}\bs{F}_0\bs\lambda_{0,i}\\
        &&+\,\frac{2}{NT}\sum_{i=1}^N\bs{D}_i'\bs A_i\bs{M}_{\bs F}\bs{F}_0\bs\lambda_{0,i}+\frac{1}{NT}\sum_{i=1}^N\bs{e}_i'(\bs{P}_{\bs F}-\bs P_{\bb F}) \bs{e}_i+o_p(1)\\
        &=&\tilde S_{NT}(\alpha,\bs F)+\frac{1}{NT}\sum_{i=1}^N\bs{e}_i'(\bs{P}_{\bs F}-\bs P_{\bb F})\bs{e}_i+o_p(1)
    \end{eqnarray*}
    uniformly by Lemma~\ref{Lm:SSE}. It now suffices to see that $\sup_{\bs F\in\mathcal{F}}\left\|\frac{1}{NT}\sum_{i=1}^N\bs{e}_i'\bs{P}_{\bs F}\bs{e}_i\right\|=o_p(1)$ by \pc{bai2009panel} Lemma~A.1 and that $\bb F\in\mathcal{F}$.
\end{proof}

\begin{lemma}\label{Lm:tildeS_quad_form}
    Under Assumptions~\ref{A:sampling} to~\ref{A:heterogeneity.tv.factor} for the PO model in~\eqref{eq:potential.outcomes}, with $k_0$ fixed and known, $\tilde S_{NT}(\beta,\bs F)=\beta^2\bb D(\bs F)+\bs\theta'\bs{B}\bs\theta$, with $\beta=\gamma-\alpha$, $\bs\theta=\mathrm{vec}(\bs M_F\bb F)+\bs{B}^{-1}\bs{C}\beta$, $\bs{B}=(\bb\Lambda'\bb\Lambda/N)\otimes\bs{I}_T$, and $\bs{C}=(1/NT)\sum_{i=1}^N\bb\lambda_i\otimes\bs M_F\bs D_i$.
\end{lemma}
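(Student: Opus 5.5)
The plan is to rewrite the ``signal'' part of the residual using the extended factor representation in \eqref{eq:IFE.bad.control}, expand the resulting quadratic form, vectorize, and complete the square in $\mathrm{vec}(\bs M_{\bs F}\bb F)$.

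\textbf{Step 1: rewrite the signal.} By Assumption~\ref{A:heterogeneity.tv.factor}, the $t$-th element of $\diag(\bs\alpha_i)\bs D_i$ is
\begin{displaymath}
\alpha_{i,t}d_{i,t}=\gamma d_{i,t}+(d_i\bs\lambda_{\alpha,i})'(d_t\bs F_{\alpha,t}).
\end{displaymath}
Hence
\begin{displaymath}
\bs A_i\bs D_i+\bs F_0\bs\lambda_{0,i}=(\gamma-\alpha)\bs D_i+\bb F\bb\lambda_i=\beta\bs D_i+\bb F\bb\lambda_i .
\end{displaymath}
The three terms in $\tilde S_{NT}$ of Lemma~\ref{Lm:tildeS} are exactly the expansion of $(\bs A_i\bs D_i+\bs F_0\bs\lambda_{0,i})'\bs M_{\bs F}(\bs A_i\bs D_i+\bs F_0\bs\lambda_{0,i})$. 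Substituting the display above gives
\begin{displaymath}
\tilde S_{NT}=\frac{\beta^2}{NT}\sum_{i}\bs D_i'\bs M_{\bs F}\bs D_i+\frac{2\beta}{NT}\sum_i\bs D_i'\bs M_{\bs F}\bb F\bb\lambda_i+\frac{1}{NT}\sum_i\bb\lambda_i'\bb F'\bs M_{\bs F}\bb F\bb\lambda_i .
\end{displaymath}

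\textbf{Step 2: vectorize.} Let $\bs f=\mathrm{vec}(\bs M_{\bs F}\bb F)$. Using $\mathrm{vec}(\bs X\bs Y\bs Z)=(\bs Z'\otimes\bs X)\mathrm{vec}(\bs Y)$ and idempotency of $\bs M_{\bs F}$, we have $\bs M_{\bs F}\bb F\bb\lambda_i=(\bb\lambda_i'\otimes\bs I_T)\bs f$. The last term then becomes
\begin{displaymath}
\frac{1}{NT}\sum_i\bs f'(\bb\lambda_i\bb\lambda_i'\otimes\bs I_T)\bs f\propto\bs f'\bs B\bs f,
\end{displaymath}
since $\sum_i\bb\lambda_i\bb\lambda_i'=\bb\Lambda'\bb\Lambda$. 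For the cross term, $\bs D_i'\bs M_{\bs F}\bb F\bb\lambda_i=(\bb\lambda_i\otimes\bs M_{\bs F}\bs D_i)'\bs f$, so the middle term equals $2\beta\,\bs C'\bs f$.

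\textbf{Step 3: complete the square.} Because $\bb\Lambda'\bb\Lambda$ is diagonal and positive definite (Assumption~\ref{A:heterogeneity.tv.factor}(i)--(ii)), $\bs B$ is invertible. Then
\begin{displaymath}
\bs f'\bs B\bs f+2\beta\bs C'\bs f=\bs\theta'\bs B\bs\theta-\beta^2\bs C'\bs B^{-1}\bs C,\qquad \bs\theta=\bs f+\bs B^{-1}\bs C\beta .
\end{displaymath}
It remains to show that
\begin{displaymath}
\frac{1}{NT}\sum_i\bs D_i'\bs M_{\bs F}\bs D_i-\bs C'\bs B^{-1}\bs C=\bb D(\bs F).
\end{displaymath}
This follows from the mixed-product rule,
\begin{displaymath}
(\bb\lambda_i\otimes\bs M_{\bs F}\bs D_i)'\big((\bb\Lambda'\bb\Lambda/N)^{-1}\otimes\bs I_T\big)(\bb\lambda_j\otimes\bs M_{\bs F}\bs D_j)=\bb\lambda_i'(\bb\Lambda'\bb\Lambda/N)^{-1}\bb\lambda_j\;\bs D_i'\bs M_{\bs F}\bs D_j,
\end{displaymath}
summed over $i,j$. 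This reproduces the double sum in Assumption~\ref{A:heterogeneity.tv.factor}(iii).

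\textbf{Main obstacle: normalization bookkeeping.} With $\bs C$ carrying $1/(NT)$ and $\bs B$ carrying only $1/N$, the quadratic term picks up an extra $1/T$ relative to $\bs f'\bs B\bs f$. Similarly, $\bs C'\bs B^{-1}\bs C$ carries $1/(N^2T^2)$, whereas $\bb D(\bs F)$ has $1/(N^2T)$. I would fix this consistently by scaling the vectorized factor as $\bs f=\mathrm{vec}(\bs M_{\bs F}\bb F)/\sqrt T$ (equivalently, absorbing $T$ into $\bs B$ or $\bs C$), in line with the normalization $\bb F'\bb F/T=\bs I$. I would then check that all three pieces match the stated $\bs\theta$, $\bs B$, $\bs C$ and $\bb D(\bs F)$ under that single convention. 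Everything else is routine Kronecker algebra.
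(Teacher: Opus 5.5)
Your proposal is correct and follows the paper's route. Like the paper, you rewrite $\bs A_i\bs D_i+\bs F_0\bs\lambda_{0,i}$ as $\beta\bs D_i+\bb F\bb\lambda_i$, and where the paper then just cites Bai's Proposition~1 for the completing-the-square step, you carry it out explicitly with the Kronecker identities. The factor-of-$T$ mismatch you flag is genuine and the paper glosses over it: the clean fix is to replace $\bs B$ by $\bs B/T$ throughout (so $\bs\theta=\mathrm{vec}(\bs M_{\bs F}\bb F)+T\bs B^{-1}\bs C\beta$), which leaves $\bs C$ and $\bb D(\bs F)$ exactly as stated, whereas rescaling $\mathrm{vec}(\bs M_{\bs F}\bb F)$ by $1/\sqrt T$ alone still leaves a stray $\sqrt T$ in front of $\bs B^{-1}\bs C\beta$.
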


\begin{proof}[Proof of Lemma~\ref{Lm:tildeS_quad_form}]
    By Assumption~\ref{A:heterogeneity.tv.factor}, $\diag(\bs\alpha_i)\bs{D}_i=\gamma\bs{D}_i+\diag(\bs F_\alpha\bs\lambda_{\alpha,i})\bs{D}_i$ and hence $\bs A_i\bs D_i=(\gamma-\alpha)\bs{D}_i+\diag(\bs F_\alpha\bs\lambda_{\alpha,i})\bs{D}_i=(\gamma-\alpha)\bs{D}_i+\tilde{\bs{F}}_\alpha \tilde{\bs{\lambda}}_{\alpha,i}$, with $\tilde{\bs F}_\alpha=(\bs 0_{T_0\times k_\alpha}',\bs{F}_{\alpha,t\ge T_0}')'$ and $\tilde{\bs{\lambda}}_{\alpha,i}=\bs\lambda_{\alpha,i}$ if $i\in\mathcal{T}$, $\bs 0_{k_\alpha\times 1}$ otherwise. This holds because\\
    \begin{displaymath}
    \begin{split}
        \diag(\bs F_\alpha\bs\lambda_{\alpha,i})\bs{D}_i &= \begin{bmatrix}
            d_{i,1}\bs\lambda_{\alpha,i}'\bs{F}_{\alpha,1}\\
             \vdots\\
             d_{i,T}\bs\lambda_{\alpha,i}'\bs{F}_{\alpha,T}
        \end{bmatrix} =
        \begin{bmatrix}
            (0\times\bs\lambda_{\alpha,i})'\bs{F}_{\alpha,1}\\
            \vdots\\
            (0\times\bs\lambda_{\alpha,i})'\bs{F}_{\alpha,T_0}\\
            (d_i\bs\lambda_{\alpha,i})'\bs{F}_{\alpha,T_0+1}\\
            \vdots\\
            (d_i\bs\lambda_{\alpha,i})'\bs{F}_{\alpha,T} 
        \end{bmatrix} = \tilde{\bs F}_\alpha\tilde{\bs{\lambda}}_{\alpha,i}.\\
    \end{split}
    \end{displaymath}
    As a result,
    \begin{eqnarray*}
        \tilde S_{NT}(\alpha,\bs F)&=&\frac{1}{NT}\sum_{i=1}^N\big((\gamma-\alpha)\bs{D}_i+\tilde{\bs{F}}_\alpha\tilde{\bs{\lambda}}_{\alpha,i} \big)'\bs{M}_{\bs F}\big((\gamma-\alpha)\bs{D}_i+\tilde{\bs{F}}_\alpha\tilde{\bs{\lambda}}_{\alpha,i}\big)\\
        &&+\,\frac{1}{NT}\sum_{i=1}^N\bs\lambda_{0,i}'\bs{F}_0'\bs{M}_{\bs F}\bs{F}_0\bs\lambda_{0,i}+\frac{2}{NT}\sum_{i=1}^N\bs{D}_i' \big((\gamma-\alpha)+\tilde{\bs{F}}_\alpha\tilde{\bs{\lambda}}_{\alpha,i}\big)'\bs{M}_{\bs F}\bs{F}_0\bs\lambda_{0,i}\\
        &=&(\gamma-\alpha)^2\,\frac{1}{NT}\sum_{i=1}^N\bs{D}_i'\bs{M}_{\bs F}\bs{D}_i+2(\gamma-\alpha)\,\frac{1}{NT}\sum_{i=1}^N\bs{D}_i' \bs{M}_{\bs F}\tilde{\bs{F}}_\alpha\tilde{\bs{\lambda}}_{\alpha,i}\\
        &&+\,\frac{1}{NT}\sum_{i=1}^N\tilde{\bs{\lambda}}_{\alpha,i}'\tilde{\bs{F}}_\alpha'\bs{M}_{\bs F}\tilde{\bs{F}}_\alpha \tilde{\bs{\lambda}}_{\alpha,i}+\frac{1}{NT}\sum_{i=1}^N\bs\lambda_{0,i}'\bs{F}_0'\bs{M}_{\bs F}\bs{F}_0\bs\lambda_{0,i}\\
        &&+\,\frac{2}{NT}\sum_{i=1}^N \tilde{\bs{\lambda}}_{\alpha,i}'\bs{\tilde{F}}_\alpha'\bs{M}_{\bs F}\bs{F}_0\bs\lambda_{0,i}+2(\gamma-\alpha)\,\frac{1}{NT}\sum_{i=1}^N\bs{D}_i'\bs{M}_{\bs F}\bs{F}_0\bs\lambda_{0,i}.
    \end{eqnarray*}
    However, it follows from $(\bs{F}_0\bs\lambda_{0,i}+\tilde{\bs{F}}_\alpha\tilde{\bs{\lambda}}_{\alpha,i})'\bs{M}_{\bs F}(\bs{F}_0\bs\lambda_{0,i}+\tilde{\bs{F}}_\alpha\tilde{\bs{\lambda}}_{\alpha,i})=\tilde{\bs{\lambda}}_{\alpha,i}'\tilde{\bs{F}}_\alpha'\bs{M}_{\bs F} \tilde{\bs{F}}_\alpha\tilde{\bs{\lambda}}_{\alpha,i}+\bs\lambda_{0,i}'\bs{F}_0'\bs{M}_{\bs F}\bs{F}_0 \bs\lambda_{0,i}+2\,\tilde{\bs{\lambda}}_{\alpha,i}'\tilde{\bs{F}}_\alpha'\bs{M}_{\bs F}\bs{F}_0\bs\lambda_{0,i}$ that
    \begin{eqnarray*}
        \tilde S_{NT}(\alpha,\bs F)&=&(\gamma-\alpha)^2\,\frac{1}{NT}\sum_{i=1}^N\bs{D}_i'\bs{M}_{\bs F}\bs{D}_i+\sum_{i=1}^N(\bs{F}_0 \bs\lambda_{0,i}+\tilde{\bs{F}}_\alpha\tilde{\bs{\lambda}}_{\alpha,i})'\bs{M}_{\bs F}(\bs{F}_0\bs\lambda_{0,i}+\tilde{\bs{F}}_\alpha \tilde{\bs{\lambda}}_{\alpha,i})\\
        &&+\,2(\gamma-\alpha)\,\frac{1}{NT}\sum_{i=1}^N\bs{D}_i^\prime\bs{M}_{\bs F}(\tilde{\bs{F}}_\alpha\tilde{\bs{\lambda}}_{\alpha,i}+\bs{F}_0\bs\lambda_{0,i})\\
        &=&\frac{(\gamma-\alpha)^2}{NT}\sum_{i=1}^N\bs{D}_i'\bs{M}_{\bs F}\bs{D}_i+\frac{1}{NT}\sum_{i=1}^N\bb\lambda_i'\bb F'\bs{M}_{\bs F}\bb F\bb\lambda_i+\frac{2(\gamma-\alpha)}{NT}\sum_{i=1}^N\bs{D}_i^\prime\bs{M}_{\bs F}\bb F\bb\lambda_i\\
        &=&\frac{\beta^2}{NT}\sum_{i=1}^N\bs D_i^\prime\bs{M}_{\bs F}\bs D_i+\trace\left[\left(\frac{\bb{F}'\bs{M}_{\bs F}\bb{F}}{T}\right) \left(\frac{\bb\Lambda'\bb\Lambda}{N}\right)\right]+\frac{2\beta}{NT}\sum_{i=1}^N\bs D_i^\prime\bs{M}_{\bs F}\bb{F}\bb\lambda_i.
    \end{eqnarray*}
    The same argument as in \pc{bai2009panel} Proposition~1 then yields
    \begin{equation} \label{eq:tildeS_quadratic_form}
        \tilde S_{NT}(\beta,\bs F)= \beta^2 \bb D(\bs F) + \bs \theta^\prime \bs B \bs \theta
    \end{equation}
    with $\bb D(\bs F)$ as in Assumption~\ref{A:heterogeneity.tv.factor}(iii) and $\bs\theta$ as in the enunciate. In fact, the only difference with respect to \cn{bai2009panel} is that we consider a single covariate $\bs D_i$, so that $\beta$ is a scalar rather than a vector.
\end{proof}

\begin{lemma}\label{Lm:argmin_tildeS}
Under Assumptions~\ref{A:sampling} to~\ref{A:heterogeneity.tv.factor} for the PO model in~\eqref{eq:potential.outcomes}, with $k_0$ fixed and known, $(\gamma,\bb F\bs H)=\argmin_{\alpha,\bs F}\tilde S_{NT}(\alpha,\bs F)$ for some rotation matrix $\bs H$. 
\end{lemma}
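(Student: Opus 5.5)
The plan is to use the quadratic-form representation of Lemma~\ref{Lm:tildeS_quad_form}. For every $(\alpha,\bs F)$ with $\bs F\in\mathcal F$ we have $\tilde S_{NT}(\beta,\bs F)=\beta^2\bb D(\bs F)+\bs\theta'\bs B\bs\theta$, where $\beta=\gamma-\alpha$. The argument then has three steps: a lower bound of zero, attainment of zero, and uniqueness of the minimiser up to rotation.

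\emph{Lower bound.} By Assumption~\ref{A:heterogeneity.tv.factor}(ii), $\bb\Lambda'\bb\Lambda/N$ is positive definite for large $N$, with limit $\bs\Sigma_{\bb\Lambda}$. Hence $\bs B=(\bb\Lambda'\bb\Lambda/N)\otimes\bs I_T$ is positive definite, and $\bs\theta'\bs B\bs\theta\ge0$. Assumption~\ref{A:heterogeneity.tv.factor}(iii) gives $\bb D(\bs F)>0$. I will read this uniformly, as $\inf_{\bs F\in\mathcal F}\bb D(\bs F)>0$, which is the analogue of Assumption~\ref{A:bai2009}(a). Together these give $\tilde S_{NT}\ge0$ everywhere.

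\emph{Attainment.} At $\alpha=\gamma$ we have $\beta=0$ and $\bs\theta=\mathrm{vec}(\bs M_{\bs F}\bb F)$. Choosing $\bs F=\bb F\bs H$ with $\bs H$ invertible (for instance orthogonal, so that $\bs F'\bs F/T=\bs I$ is preserved) gives $\bs M_{\bs F}\bb F=\bs0$. So $\tilde S_{NT}(\gamma,\bb F\bs H)=0$, and $(\gamma,\bb F\bs H)$ is a minimiser. If the number of estimated factors exceeds $\breve k$, I take $\bs F$ to contain $\bb F\bs H$ in its span; the projection still annihilates $\bb F$.

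\emph{Uniqueness.} Let $\tilde S_{NT}(\alpha,\bs F)=0$. Both terms are nonnegative, so each must vanish. From $\beta^2\bb D(\bs F)=0$ and $\bb D(\bs F)>0$ we get $\beta=0$, that is, $\alpha=\gamma$. Then $\bs\theta=\mathrm{vec}(\bs M_{\bs F}\bb F)$, and $\bs\theta'\bs B\bs\theta=\mathrm{tr}\big[(\bb F'\bs M_{\bs F}\bb F/T)(\bb\Lambda'\bb\Lambda/N)\big]=0$. Since $\bb\Lambda'\bb\Lambda/N$ is positive definite, this forces $\bs M_{\bs F}\bb F=\bs0$. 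So the column space of $\bb F$ lies in that of $\bs F$. When the dimensions coincide, which requires the estimated factor number to equal $\breve k$ rather than $k_0$, the spaces are equal, and the normalisation $\bs F'\bs F/T=\bb F'\bb F/T=\bs I$ gives $\bs F=\bb F\bs H$ with $\bs H$ orthogonal.

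\emph{Main obstacle.} The delicate step is the uniformity and identification bookkeeping. First, the statement says ``$k_0$ known'', but the minimiser must span the $\breve k$-dimensional extended factor space; I would interpret the estimated factor dimension as $\breve k$, noting that overestimation is harmless by \cn{moon2015linear}. Second, converting the pointwise positivity in Assumption~\ref{A:heterogeneity.tv.factor}(iii) into a uniform bound over $\mathcal F$ needs compactness of $\mathcal F$ (a Stiefel-type set) and continuity of $\bb D$. With that bound, the lemma, combined with Lemma~\ref{Lm:tildeS} and a standard argmin argument as in Bai's Proposition~1, delivers Proposition~\ref{Prop:bad.control}.
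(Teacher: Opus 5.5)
Your outline (nonnegativity, attainment at $(\gamma,\bb F\bs H)$, and $\bs M_{\bs F}\bb F=\bs 0$ forcing a rotation) matches the paper's. The gap is in the uniqueness step. It rests on reading Assumption~\ref{A:heterogeneity.tv.factor}(iii) as $\inf_{\bs F\in\mathcal F}\bb D(\bs F)>0$, and that condition can never hold here.

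Because adoption is simultaneous, $\bs D_i=d_i\bs d_{1:T}$ with $\bs d_{1:T}=(d_1,\ldots,d_T)'$, so every regressor is a multiple of one vector. Any $\bs F\in\mathcal F$ whose column space contains $\bs d_{1:T}$ has $\bs M_{\bs F}\bs D_i=\bs 0$ for all $i$, and hence $\bb D(\bs F)=0$. The assumption gives positivity only at $\bb F$. No compactness or continuity argument can turn that into a uniform bound, because the infimum is attained at zero. So your step ``$\beta^2\bb D(\bs F)=0$ and $\bb D(\bs F)>0$, hence $\beta=0$'' fails exactly on the set of $\bs F$ that matters. You still have to rule out a zero of $\tilde S_{NT}$ with $\beta\ne 0$ and $\bb D(\bs F)=0$.

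The paper handles this as a separate case, using the rank-one structure. It writes $\bb D(\bs F)=A(\bs F)\breve B$, where $A(\bs F)=\bs d_{1:T}'\bs M_{\bs F}\bs d_{1:T}/T\ge 0$ and $\breve B\ge 0$ does not depend on $\bs F$. The argument then runs as follows:
\begin{enumerate}[(1)]
\item Positivity of $\bb D(\bb F)$ forces $\breve B>0$ and $A(\bb F)>0$.
\item Hence $\bb D(\bs F)=0$ implies $\bs M_{\bs F}\bs d_{1:T}=\bs 0$, so $\bs M_{\bs F}\bs D_i=\bs 0$ for all $i$.
\item This gives $\bs C=\bs 0$ and $\bs\theta=\mathrm{vec}(\bs M_{\bs F}\bb F)$, so $\bs\theta'\bs B\bs\theta=0$ yields $\bs M_{\bs F}\bb F=\bs 0$.
\item With $\breve k$ columns this means $\bs M_{\bs F}=\bs M_{\bb F}$, so $A(\bs F)=A(\bb F)>0$, a contradiction.
\end{enumerate}
Only after this does your argument for $\beta=0$, and then for $\bs F=\bb F\bs H$, go through.

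The same construction shows that your remark ``overestimating the number of factors is harmless'' fails at the level of this lemma. Suppose $\bs F$ has more than $\breve k$ columns and $\bs d_{1:T}\notin\mathrm{col}(\bb F)$. Then you can choose $\bs F$ spanning both $\bb F$ and $\bs d_{1:T}$, which gives $\tilde S_{NT}(\alpha,\bs F)=0$ for every $\alpha$. The argmin in $\alpha$ is then not unique, so the factor dimension must be fixed at exactly $\breve k$.
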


\begin{proof}[Proof of Lemma~\ref{Lm:argmin_tildeS}]
    We know from Lemma~\ref{Lm:tildeS_quad_form} that $\tilde S_{NT}(\alpha,\bs F)=\beta^2\bb D(\bs F)+\bs\theta'\bs B\bs\theta$. In view that $\bs{B}=(\bb\Lambda'\bb\Lambda/N)\otimes\bs{I}_T$ is positive definite and $\bb D(\bs F)\ge 0$ for all $\bs F\in\mathcal{F}$, $\tilde S_{NT}(\alpha,\bs F)\ge 0$ for all $(\alpha,\bs F)$, with equality holding at $(\gamma,\bb F)$. If we let $(\beta_*,\bs F_*)=\argmin\left\{\beta^2\bb D(\bs F)+\bs\theta'\bs B\bs\theta\right\}$, it then remains to show that $\tilde S_{NT}(\gamma-\beta_*,\bs F_*)=0$ if and only if $(\beta_*,\bs F_*)=(0,\bs H\bb F)$. To do so, we introduce $\bs\theta_*=\mathrm{vec}(\bs M_{F_*}\bb F)+\bs B^{-1}\bs C_*\beta_*$, with $\bs C_*=(1/NT)\sum_{i=1}^N\bb\lambda_i\otimes\bs M_{F_*}\bs D_i$. Recall that $(\beta_*,\bs F_*)=\argmin\tilde S_{NT}(\gamma-\beta,\bs F)$, and hence $\tilde S_{NT}(\gamma-\beta_*,\bs F_*)=0$ provided that both $\bs\theta_*$ and $\beta_*^2\bb D(\bs F_*)$ must equal zero. There are two cases to consider: ($a$)~$\bs\theta_*=\bs 0$ and $\beta_*=0$; and ($b$)~$\bs\theta_*=\bs 0$ and $\bb D(\bs F_*)=0$.
    \begin{enumerate}[($a$)]
        \item If $\beta_*=0$, then $\bs\theta_*=\mathrm{vec}(\bs M_{F_*}\bb F)$ must equal zero to ensure that $\tilde S_{NT}(\beta_*,\bs F_*)=0$, which would imply imply that $\bb F$ belongs to the linear subspace spanned by $\bs F_*$. As such, there exists a rotation matrix $\bs H$ such that $\bs F_*=\bb F\bs H$ given that $\bb D(\bs F_*)>0$.
        \item Assume, by contradiction, that $\bb D(\bs F_*)=0$ and $\bb D(\bb F)>0$. Let $\bs d_{1:T}=(d_1,d_2,\ldots,d_T)'$ and consider the scalar $a_{ij}=\bb\lambda_i'\left(\bb\Lambda'\bb\Lambda/N\right)^{-1}\bb\lambda_j$. It follows from $d_{i,t}=d_id_t$, with $d_i^n=d_i$ for any $n\ge 0$, that
        \begin{eqnarray*}
            \bb D(\bs F_*)&=&\frac{1}{NT}\sum_{i=1}^N\bs{D}_i'\bs{M}_{\bs F^*}\bs{D}_i-\frac{1}{N^2T}\sum_{i=1}^N\sum_{j=1}^N\bs{D}_i'\bs{M}_{\bs F^*} \bs{D}_j a_{ij}\\
            &=&\frac{1}{NT}\sum_{i=1}^N(d_i\bs d_{1:T})'\bs{M}_{\bs F^*}(d_i\bs d_{1:T})-\frac{1}{N^2T}\sum_{i=1}^N\sum_{j=1}^N a_{ij}(d_i\bs d_{1:T})' \bs{M}_{\bs F^*}(d_j\bs d_{1:T})\\
            &=&A(\bs F_*)\breve{B},
        \end{eqnarray*}
        where $A(\bs F_*)=\frac{1}{T}\,\bs d_{1:T}'\bs{M}_{\bs F^*}\bs d_{1:T}$ and $\breve{B}=\Bigg[\frac{1}{N}\sum\limits_{i=1}^N d_i-\frac{1}{N^2} \sum\limits_{1\le i,j\le N}d_id_j\bb\lambda_i'(\bb\Lambda'\bb\Lambda/N)^{-1}\bb\lambda_j\Bigg]$. Given the quadratic nature of $A(\bs F_*)$, $\bb D(\bs F_*)=0$ if and only if $A(\bs F_*)=0$ in view that $\breve{B}=0$ only if $\bb D(\bs F)=0$ for all $\bs F$, which would contradict Assumption~\ref{A:heterogeneity.tv.factor}(iii). In turn, $\bs d_{1:T}'\bs{M}_{\bs F_*}\bs d_{1:T}=0$ if $A(\bs F_*)=0$, implying that $\bs M_{\bs F_*}\bs D_i=\bs 0$ for all $i$. This means that $\bs C_*=\bs 0$ and hence $\bs M_{\bs F_*}\bb F=\bs 0$ to ensure that $\bs\theta_*=\bs 0$. This implies that $\bs F_*=\bb F\bs H$, thereby contradicting the condition $\bb D(\bs F_*)=\bb D(\bb F)>0$ in Assumption~\ref{A:heterogeneity.tv.factor}(iii).
    \end{enumerate}
    This completes the proof.
\end{proof}

\begin{proof}[Proof of Proposition 1]
     Lemma~\ref{Lm:tildeS} dictates that $S_{NT}(\alpha,\bs F)-\tilde S_{NT}(\alpha,\bs F)=o_p(1)$, whereas Lemma \ref{Lm:argmin_tildeS} ensures that $\tilde S_{NT}(\alpha,\bs F)>0$ holds for either $\alpha\ne\gamma$ or $\bs F\ne\bb F\bs H$ given that $(\gamma,\bb F\bs H)=\argmin_{\alpha,\bs F}\tilde S_{NT}(\alpha,\bs F)$. Altogether, this means that the value of $\hat\alpha$ that minimizes $S_{NT}(\alpha,\bs F)$ converges in probability to $\gamma$. In contrast, consistency does not necessarily ensue for $\hat{\bs{F}}$ because the number of elements in $\bb F$ grows as $T\to\infty$. It is nonetheless possible to show that $||\bs P_{\hat{\bs{F}}}-\bs P_{\bb F}||\xrightarrow{p}0$ by following the same steps as in the proof of Proposition 1, part (ii), in \cn{bai2009panel}, given that Assumption~\ref{A:heterogeneity.tv.factor} ensures $\bb F$ and $\bb\Lambda$ satisfy every necessary condition. 
\end{proof}

The proofs of Proposition~\ref{Prop:bad.control} and Lemmata~\ref{Lm:SSE} to~\ref{Lm:argmin_tildeS} assume no observable controls. We impose this restriction solely for convenience, but it is quite straightforward to extend the asymptotic theory to consider a $p$-dimensional vector of covariates  $\bs Z_{i,t}$ such that $\E||\bs Z_{i,t}||^4<M$ for some finite constant $M$, with partial effects $\bs\delta_0$. Let $\bs X_i=(\bs D_i,\bs Z_i)$ and $\bs\beta=(\alpha-\gamma,\bs\delta-\bs\delta_0)'$, and then rewrite the centered objective function as $S_{NT}(\bs\beta,\bs F)=\tilde{S}_{NT}(\bs\beta,\bs F)+o_p(1)$ uniformly on $(\bs\beta,\bs F)$. Identification now holds for $\bs\beta=\bs 0$ or, equivalently, $\alpha=\gamma$ and $\bs\delta=\bs\delta_0$, with all proofs ensuing analogously by applying the Frisch–Waugh–Lovell theorem.

\section{Monte Carlo setting} \label{App:setting.sim}

This section establishes the setting for the data generation processes in the Monte Carlo study we carry out in Section~\ref{sec:MC.bad.control}. We standardize the heterogeneous treatment effects to impose that $\bar\alpha=2$. More specifically, we set $\bs\lambda_{\alpha,i}=\bs\delta_{\alpha,i}-\bar{\bs\delta}_\alpha+(-1,1)'$ for treated units ($i\in\mathcal{T}$) and $\bs{F}_{\alpha,t}=\bs{f}_{\alpha,t}-\bar{\bs f}_\alpha+(1,2)'$ for all post-treatment periods ($t>T_0$), where $\bar{\bs\delta}_\alpha=\frac{1}{N_1}\sum_{i\in\mathcal{T}}\bs\delta_{\alpha,i}$, and $\bar{\bs F}_\alpha=\frac{1}{T_1}\sum_{t>T_0}\bs{F}_{\alpha,t}$. We draw $\bs{F}_{\alpha,t}$ and $\bs\delta_{\alpha,i}$ from independent standard normal distributions, holding their values fixed across $R=10{,}000$ replications. It then follows that $\bar\alpha=\gamma+\frac{1}{N_1T_1}\sum_{i\in\mathcal{T}}\sum_{t=T_0+1}^T\bs\lambda_{\alpha,i}'\bs{F}_{\alpha,t}=1-1+2=2$ in every replication. The heterogeneity factors and their loadings are independent of the errors, as well as of the PO factors and loadings in \eqref{eq:potential.outcomes}.

We also keep the factor structure in the PO model fixed across replications. We assume a single common factor that follows an AR(1) process: $F_{0,t}=-2+\rho_t\,F_{0,t-1}+e_{F,t}$, where $\rho_t=0.25+0.50\,\bs{1}(t\ge T_0+1)$ and $e_{F,t}\sim N(0,1-\rho_t^2)$. This ensures that the PO factor has different means in the pre-treatment and post-treatment periods, while keeping a constant unit variance. As in \cn{xu2017generalized}, we draw the factor loading from a uniform $U(-\sqrt{3},\sqrt{3})$ for untreated units, and from a uniform $U(-\sqrt{3/4},\sqrt{3})$ for treated units. This ensures that treated and untreated units have different averages while satisfying the consistency conditions for every estimator we consider, such as the common support assumption in the synthetic-control approach.

Finally, we carry out the simulations considering a balanced panel in which there are not only as many treated units as untreated units, but also as many pre-treatment periods as post-treatment periods. In each replication, we allocate the treatment to the first $N_1=N/2$ units as from $T_0+1=T/2+1$ (i.e., the post-treatment period comprises the last $T_1=T/2$ observations in the time series of the treated units). We focus exclusively on time-varying factors in Table~\ref{tab:simulation_bad_control} to highlight the problem with bad controls. We relegate to Appendix~\ref{App:sim.time-invariant} the simulations with time-invariant factors, which evince multimodality in the sampling distribution of the IFE estimator.

\section{Simulations with invariant factor structures} \label{App:sim.time-invariant}

We consider an additional complication that may arise when heterogeneous treatment effects include invariant factors (which are assumed away in Assumption \ref{A:heterogeneity.tv.factor}). In this case, the interaction between the invariant factors and the treatment dummy becomes perfectly collinear with the treatment dummy. To assess the consequences of this issue, we consider a set of Monte Carlo simulations with DGPs featuring heterogeneous treatment effects that are invariant over time.  

Potential outcomes follow the model in~\ref{eq:potential.outcomes}, with treatment effects constant over time, i.e., $\alpha_{i,t}=\alpha_i$ for all $i$ and $t$. In particular, $\alpha_i=\sum_{j=1}^{k_\alpha}\mu_j b_i^{(j)}$, where the $\mu_j$ are scalar constants such that $\mu_1=1/2$, $\mu_2=1$, and $\mu_3=-1/2$, and the $b_i^{(j)}$ are independent Bernoulli random variables with probability $1/2$. This implies ATT values of $1/4$ for $k_\alpha=1$, $3/4$ for $k_\alpha=2$, and $1/2$ for $k_\alpha=3$.  Such heterogeneity in treatment effects admits a linear factor representation with $\bs{F}_{\alpha,t}=(1,\ldots,1)'$ for all $t$ and $\bs{\lambda}_{\alpha,i}=(\mu_1 b_i^{(1)},\ldots,\mu_{k_\alpha} b_i^{(k_\alpha)})'$. For the potential outcomes when untreated, the data-generating process includes a single common factor ($k_0=1$) and idiosyncratic errors drawn from mutually independent standard normal distributions.

Figure~\ref{fig:time-invariant.heterogeneity} illustrates the finite-sample behavior of the estimators under these DPGs in settings with $k_\alpha\in\{1,2,3\}$. In these DGPs, the IFE estimator of $\bar\alpha$ exhibits multimodal distributions. The problem is more severe for realizations of the data in which $\bs{D}(\hat{\bs{F}})\approx 0$, meaning that the estimated factor structure is nearly collinear with the treatment dummy. When we condition on realizations with low values of $\bs{D}(\hat{\bs{F}})$, we observe greater dispersion in the IFE estimator and, in some cases, a bimodal distribution. Conversely, when $\bs{D}(\hat{\bs{F}})$ is larger, the IFE estimator tends to be closer to the true ATT. Nevertheless, when we consider the overall distribution of $\hat\alpha$, it is clear that the IFE estimator is biased in the presence of this type of treatment effect heterogeneity.

\begin{figure}[!htbp]
    \centering\includegraphics[width=\linewidth]{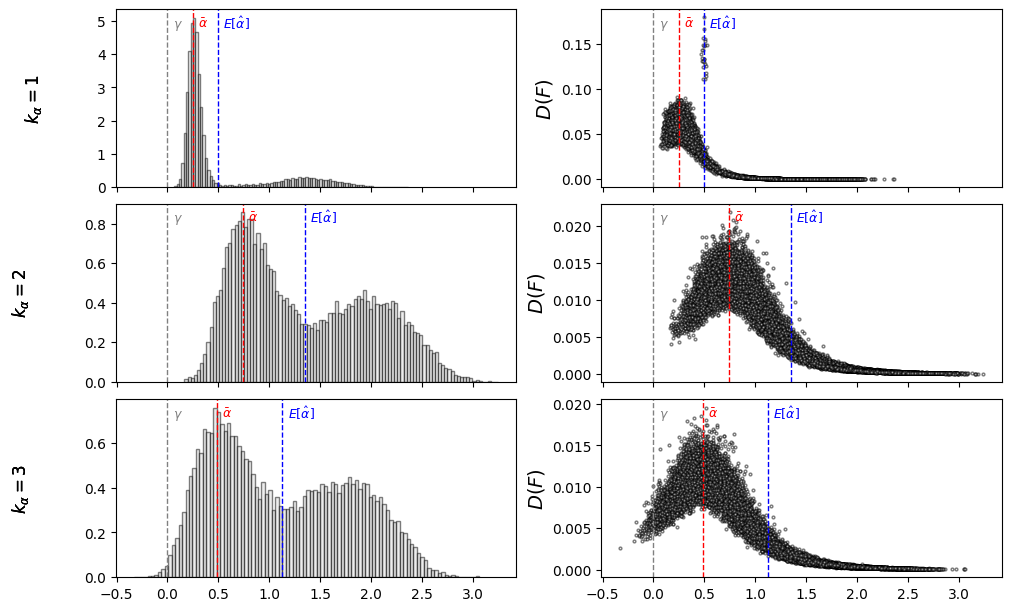}
    \caption{Simulations with time-invariant factors}
    \label{fig:time-invariant.heterogeneity}
\end{figure}

\section{Simulations with dynamic specification} \label{App:sim.dynamic}

To allow treatment effects to vary over time, one could well introduce interactions between treatment and time indicators for each post-treatment period. If the heterogeneity in treatment effects were of the form $\alpha_{i,t}=\psi_t$, adopting a dynamic specification would effectively make the treatment effects homogeneous:
\begin{equation} \label{eq:Dynamic_specification}
y_{i,t}=\sum_{s>T_0}\psi_s\,\mathbf{1}(t=s)\,d_{i,t}+\bs{\lambda}_{0,i}'\bs{F}_{0,t}+u_{i,t},
\end{equation}
where $u_{i,t}$ is a white noise. In this case, the IFE estimator is consistent for $(\psi_{T_0+1},\ldots,\psi_T)$. As before, augmenting heterogeneity with a purely idiosyncratic component $a_{i,t}$ does not affect this conclusion, given that it does not admit a linear factor representation. Under dynamic specifications, bad controls arise only if the heterogeneity in treatment effects possesses a linear factor structure on top of $\psi_t$.

To illustrate this case, we run Monte Carlo simulations using the same design as in \cn{xu2017generalized}. Treatment effects satisfy $\alpha_{i,t}=\psi_t+a_{i,t}$, where $\psi_t=t-T_0$ and $a_{i,t}\sim N(0,5)$ is totally idiosyncratic without any factor representation. The outcome is given by $$y_{i,t}=\alpha_{i,t}\,d_{i,t}+5+\bs\lambda_{0,i}'\bs F_{0,t}+\zeta_t+\mu_i+x_{i,t}^{(1)}+3\,x_{i,t}^{(2)} +e_{i,t},$$
where $x_{i,t}^{(j)}=1+\bs\lambda_{0,i}'\bs F_{0,t}+\bs\lambda_{0,i}'(1,1)+\bs F_{0,t}'(1,1)+\eta_{i,t}^{(j)}$, with $\eta_{i,t}^{(j)}$ denoting independent standard normal noises for $j\in\{1,2\}$. The common factors $\bs{F}_{0,t}$ and time fixed effects $\zeta_t$ are also independent standard normal, whereas factor loadings and unit fixed effects are uniform $U(-\sqrt{3},\sqrt{3})$. Finally, we contemplate different sample sizes $(N_0,N_1,T_0,T_1)$, resting on $R=10{,}000$ replications.

\input{Tables/Dynamic}

Under the dynamic specification described in Equation~\eqref{eq:Dynamic_specification}, the IFE estimator is consistent for the ATT parameter at any post-treatment period in that $\hat\psi_t\xrightarrow{p}\psi_t$ for all $t>T_0$. Accordingly, the IFE estimator of the average post-treatment effect $\hat\alpha=(1/T_1)\sum_{t>T_0}\hat\psi_t$ is also consistent, converging in probability to $\bar\alpha$. Table~\ref{tab:dynamic} reports the behavior of the IFE estimators under the dynamic specification for both the average treatment effect (i.e., $\bar\alpha$) and the ATT at time $t=T_0+10$ (i.e., $\psi_{T_0+10}=10)$. It is apparent that the bias and standard error of both IFE estimators shrink to zero as the sample size grows.

The behavior of the IFE estimator is sharply different under the static specification in~\ref{eq:SSE_min}. Bias arises if the proportions of treated units and treated periods are non-negligible, in line with the bad-control problem we describe in Section \ref{sec:heterogenous}. It is also interesting that the standard error in this last scenario is larger than in scenarios with fewer post-treatment periods and fewer treated units. This happens because, in those cases, treatment-effect heterogeneity does not generate enough variation for the IFE estimator to capture it as part of the factor model (see Remark~\ref{rmk:non-negligible}). In contrast, when the proportions of treated units and treated periods are non-negligible, the factor estimates absorb the treatment effect heterogeneity, thereby increasing the multicollinearity with the treatment dummy. This implies larger standard errors, as we discuss in Remark~\ref{rmk:fixed.effect}.
\end{document}

%% file: Tables/Comparison.tex
\begin{landscape}
\begin{table}
\caption{ATT estimation under homogeneous and heterogeneous treatment effects}
\label{tab:simulation_bad_control}
\centering
\scalebox{0.7}{
\begin{tabular}{llcccccccccc}
\multicolumn{12}{p{29.5cm}}{{\footnotesize We simulate the PO model in \eqref{eq:potential.outcomes}, with $F_{0,t}=-2+\rho_t\,F_{0,t-1}+e_{F,t}$, where $\rho_t=0.25+0.50\,\bs{1}(t\ge T_0+1)$ and $e_{F,t}\sim N(0,1-\rho_t^2)$, and $\bs{\lambda}_{0,i}$ coming from $U(-\sqrt{3},\sqrt{3})$ for untreated units and from $U(-\sqrt{3/4},\sqrt{3})$ for treated units. We set $\alpha_{i,t}=\alpha=2$ for all units and time periods in the case of homogeneous treatment effects, whereas $\alpha_{i,t}=1+\bs\lambda_{\alpha,i}'\bs{F}_{\alpha,t}$ if treatment effects are heterogeneous. In particular, $\bs\lambda_{\alpha,i}=\bs\delta_{\alpha,i}-\bar{\bs\delta}_\alpha+(-1,1)'$ for $i\in\mathcal{T}$, whereas $\bs{F}_{\alpha,t}=\bs{f}_{\alpha,t}-\bar{\bs f}_\alpha+(1,2)'$ for $t>T_0$, where $\bs{f}_{\alpha,t}\sim N(0,1)$, $\bs\delta_{\alpha,i}\sim N(0,1)$, $\bar{\bs\delta}_\alpha=\frac{1}{N_1}\sum_{i\in\mathcal{T}}\bs\delta_{\alpha,i}$, and $\bar{\bs f}_\alpha=\frac{1}{T_1}\sum_{t>T_0}\bs{f}_{\alpha,t}$. All standard normal variates are mutually independent, and we hold the values of he factors and their loadings fixed across replications. Finally, we allocate the treatment to the first $N_1=N/2$ units as from $T_0+1=T/2+1$, thereby forming a balanced panel. Each column displays the mean and standard deviation (in parentheses) of the ATT estimates over 10,000 simulations.}}\\
\toprule
&     & \multicolumn{5}{c}{homogeneous treatment effects} & \multicolumn{5}{c}{heterogeneous  treatment effects} \\
\cmidrule(lr){3-7} \cmidrule(lr){8-12}
$T$ & $N$ & IFE & GSC & SC & DSC & SDiD & IFE & GSC & SC & DSC & SDiD\\
\midrule
50      & 50    & 2.001 (0.084) & 2.003 (0.093) & 2.030 (0.088) & 2.018 (0.090) & 2.005 (0.111) & 0.988 (0.264) & 2.018 (0.111) & 2.148 (0.101) & 2.104 (0.103) & 2.029 (0.126) \\
        & 100   & 2.000 (0.059) & 2.003 (0.061) & 2.037 (0.058) & 2.010 (0.062) & 2.004 (0.089) & 0.997 (0.185) & 2.012 (0.069) & 2.072 (0.067) & 2.048 (0.069) & 2.015 (0.097) \\
        & 200   & 2.000 (0.043) & 2.009 (0.044) & 2.052 (0.043) & 2.025 (0.045) & 2.013 (0.070) & 1.000 (0.111) & 2.012 (0.046) & 2.072 (0.044) & 2.043 (0.046) & 2.009 (0.072) \\
        & 1000  & 2.000 (0.020) & 2.014 (0.021) & 2.102 (0.020) & 2.047 (0.021) & 2.002 (0.083) & 1.000 (0.055) & 2.011 (0.021) & 2.051 (0.020) & 2.029 (0.021) & 1.999 (0.083) \\
100     & 50    & 2.000 (0.066) & 2.010 (0.078) & 2.149 (0.070) & 2.085 (0.074) & 2.030 (0.084) & 0.996 (0.209) & 2.012 (0.083) & 2.102 (0.073) & 2.088 (0.075) & 2.040 (0.083) \\
        & 100   & 2.001 (0.058) & 2.032 (0.072) & 2.164 (0.062) & 2.171 (0.063) & 2.077 (0.080) & 0.999 (0.115) & 2.004 (0.047) & 2.067 (0.045) & 2.038 (0.047) & 2.009 (0.059) \\
        & 200   & 2.000 (0.031) & 2.006 (0.033) & 2.070 (0.032) & 2.040 (0.033) & 2.007 (0.047) & 0.999 (0.099) & 2.026 (0.046) & 2.102 (0.043) & 2.108 (0.043) & 2.039 (0.062) \\
        & 1000  & 2.000 (0.014) & 2.006 (0.015) & 2.054 (0.015) & 2.031 (0.015) & 2.005 (0.041) & 1.000 (0.039) & 2.009 (0.016) & 2.050 (0.015) & 2.033 (0.016) & 2.003 (0.043) \\
200     & 50    & 2.000 (0.044) & 2.004 (0.052) & 2.087 (0.046) & 2.056 (0.048) & 2.019 (0.051) & 1.000 (0.122) & 2.007 (0.060) & 2.124 (0.053) & 2.105 (0.054) & 2.034 (0.057) \\
        & 100   & 2.000 (0.035) & 2.009 (0.042) & 2.097 (0.038) & 2.094 (0.039) & 2.024 (0.045) & 1.000 (0.082) & 2.002 (0.034) & 2.056 (0.034) & 2.037 (0.035) & 2.007 (0.039) \\
        & 200   & 2.000 (0.023) & 2.005 (0.025) & 2.054 (0.025) & 2.045 (0.025) & 2.010 (0.031) & 0.999 (0.072) & 2.001 (0.022) & 2.043 (0.022) & 2.013 (0.023) & 2.002 (0.029) \\
        & 1000  & 2.000 (0.011) & 2.006 (0.012) & 2.045 (0.012) & 2.036 (0.012) & 2.008 (0.020) & 1.000 (0.025) & 2.007 (0.012) & 2.072 (0.012) & 2.048 (0.013) & 2.009 (0.021) \\
1000    & 50    & 2.000 (0.021) & 2.001 (0.024) & 2.085 (0.023) & 2.068 (0.023) & 2.027 (0.022) & 1.000 (0.048) & 2.000 (0.021) & 2.040 (0.019) & 1.988 (0.022) & 1.996 (0.020) \\
        & 100   & 2.000 (0.014) & 2.000 (0.015) & 2.043 (0.014) & 2.014 (0.016) & 2.003 (0.015) & 1.000 (0.036) & 2.001 (0.017) & 2.057 (0.017) & 2.051 (0.017) & 2.013 (0.017) \\
        & 200   & 2.000 (0.010) & 2.001 (0.011) & 2.035 (0.011) & 2.032 (0.011) & 2.006 (0.012) & 1.000 (0.025) & 2.001 (0.011) & 2.038 (0.011) & 2.025 (0.011) & 2.004 (0.011) \\
        & 1000  & 2.000 (0.005) & 2.001 (0.005) & 2.023 (0.005) & 2.019 (0.005) & 2.002 (0.006) & 1.000 (0.011) & 2.001 (0.005) & 2.027 (0.005) & 2.023 (0.005) & 2.002 (0.006) \\
\bottomrule
\end{tabular}}
\end{table}
\end{landscape}

%% file: Tables/Empirical.tex
\newcolumntype{P}[1]{>{\centering\arraybackslash}p{#1}}

\begin{table}[!ht]
    \centering
    \caption{Enterprise Zone Program effects on unemployment entry and exit rates}
    \begin{tabular}{p{3.5cm}rrcr}
    \multicolumn{5}{p{12cm}}{{\footnotesize Outputs are on a logarithmic scale. The p-values in brackets refer to the permutation-based placebo tests. IFE refers to the interactive fixed effects estimator \cite{bai2009panel}; SC denotes the standard synthetic-control estimator \cite{abadie2010synthetic}; DSC refers to the demeaned SC estimator \cite{ferman2021synthetic}; GSC denotes the generalized SC estimator \cite{xu2017generalized}; and SDiD corresponds to the synthetic difference-in-differences estimator \cite{arkhangelsky2021synthetic}.}}\\
    \toprule
           & \multicolumn{2}{c}{exit rate}\\
    \cline{2-3}
    estimators &       to a job & unknown reasons && entry rate\\
    \midrule\\ [-2.5em]
    IFE         & 0.036     & 0.003     && 0.004 \\ [-1.25em]
                & [0.097]   & [0.906]   && [0.824] \\
    GSC         & -0.172    & 0.053     && -0.016 \\ [-1.25em]
                & [0.001]   & [0.083]   && [0.360] \\
    SC          & -0.032    & 0.051     && 0.007 \\ [-1.25em]
                & [0.349]   & [0.015]   && [0.155] \\
    DSC         & -0.015    & 0.044     && 0.037 \\ [-1.25em]
                & [0.340]   & [0.066]   && [0.241] \\
    SDiD        & -0.002    & 0.048     && 0.039 \\ [-1.25em]
                & [0.947]   & [0.023]   && [0.015] \\
    \bottomrule
    \end{tabular}\label{tab:empirical}
\end{table}

%% file: Tables/Dynamic.tex
\begin{table}[!ht]
    \centering\caption{Dynamic Specification with $\alpha_{it} = \delta_t + e_{i,t}^\alpha$}
    \begin{tabular}{llllcccc}
    \multicolumn{8}{p{.875\textwidth}}{{\footnotesize The PO model reads $y_{i,t}=\alpha_{i,t}\,d_{i,t}+5+\bs\lambda_{0,i}'\bs F_{0,t}+\zeta_t+\mu_i +x_{i,t}^{(1)}+3x_{i,t}^{(2)}+e_{i,t}$, where $\alpha_{i,t}=\psi_t+a_{i,t}$ with $\psi_t=t-T_0$ and $a_{i,t}\sim N(0,5)$, and $x_{i,t}^{(j)}=1+\bs\lambda_{0,i}'\bs F_{0,t}+\bs\lambda_{0,i}'(1,1)+\bs F_{0,t}'(1,1)+\eta_{i,t}^{(j)}$, with $\eta_{i,t}^{(j)}$ denoting independent standard normal noises for $j\in\{1,2\}$. Both $\bs{F}_{0,t}$ and $\zeta_t$ are independent standard normal, whereas both factor loadings and $\mu_i$ come from independent uniform distributions $U(-\sqrt{3},\sqrt{3})$. We assess the bias and standard error (in parenthesis) of the IFE estimators of $\bar\alpha$ for both specification, Static and Dynamic, and $\psi_{T_0+10}$ for Dynamic specification across 10,000 replications and different sample sizes.}}\\
    \toprule
        &&&& \multicolumn{2}{c}{Dynamic}  && Static\\
        \cline{5-6}\cline{8-8}
        $N_0$ & $N_1$ & $T_0$ & $T_1$ & $\bar\alpha$    & $\psi_{10}$       && $\bar\alpha$   \\
        \midrule
        40  & 20    & 15    & 10    & 0.1608 (1.360)    & 0.1373 (4.997)    && -0.5568 (2.703) \\
        200 & 20    & 15    & 10    & -0.0783 (0.758)   & -0.2429 (1.970)   && 1.0503 (0.371) \\
        500 & 20    & 500   & 10    & -0.0034 (0.074)   & -0.0020 (0.231)   && 0.0035 (0.074) \\
        500 & 500   & 500   & 100   & 0.0006 (0.007)    & 0.0023 (0.063)    && 0.9202 (0.351) \\
    \bottomrule
    \end{tabular}\label{tab:dynamic}
\end{table}